\newtheorem{theorem}{Theorem}
\newtheorem{lemma}[theorem]{Lemma}
\newcommand{\I}{\mathcal{I}}
\renewcommand{\L}{\mathtt{L}}
\newcommand{\R}{\mathtt{R}}
\renewcommand{\P}{\mathcal{P}}
\newcommand{\mrm}[1]{\mathrm{#1}}
\newcommand{\mtt}[1]{\mathtt{#1}}
\newcommand{\floor}[1]{\lfloor #1 \rfloor}
\begin{document}
\title{Efficient Non-isomorphic Graph Enumeration Algorithms for Subclasses of Perfect Graphs}
\author[1]{Jun Kawahara}
\author[2]{Toshiki Saitoh}
\author[2]{Hirokazu Takeda}
\author[3]{Ryo Yoshinaka}
\author[2]{Yui Yoshioka}
\affil[1]{Kyoto University}
\affil[2]{Kyushu Institute of Technology}
\affil[3]{Tohoku University}
\date{}

\maketitle              
\begin{abstract}
Intersection graphs are well-studied in the area of graph algorithms. 
Some intersection graph classes are known to have algorithms enumerating all unlabeled graphs by reverse search. 
Since these algorithms output graphs one by one and the numbers of graphs in these classes are vast, they work only for a small number of vertices. 
Binary decision diagrams (BDDs) are compact data structures for various types of data and useful for solving optimization and enumeration problems. 
This study proposes enumeration algorithms for five intersection graph classes, which admit $\mathrm{O}(n)$-bit string representations for their member graphs. 
Our algorithm for each class enumerates all unlabeled graphs with $n$ vertices over BDDs representing the binary strings in time polynomial in $n$. 
Moreover, our algorithms are extended to enumerate those with constraints on the maximum (bi)clique size and/or the number of edges.

\end{abstract}

\section{Introduction}
This paper is concerned with efficient enumeration of unlabeled intersection graphs.
An intersection graph has a geometric representation such that each vertex of the graph corresponds to a geometric object and the intersection of two objects represents an edge between the two vertices in the graph. 
Intersection graphs are well-studied for their practical and theoretical applications~\cite{Brandstadt:1999,Spinrad}. 
For example, interval graphs, which are represented by intervals on a real line, are applied in bioinformatics, scheduling, and so on~\cite{Golumbic04}.
Proper interval graphs are a subclass of interval graphs with interval representations where no interval is properly contained to another. 
These graph classes are related to important graph parameters: The bandwidth of a graph $G$ is equal to
 the smallest value of the maximum clique sizes in proper interval graphs that extend $G$~\cite{KaplanS96}.

The literature has considered the enumeration problems for many of the intersection graph classes.
The graph enumeration problem is to enumerate all the graphs with $n$ vertices in a specified graph class. 
If it requires not enumerating two isomorphic graphs, it is called \emph{unlabeled}. Otherwise, it is called \emph{labeled}. 
Unlabeled enumeration algorithms based on reverse search~\cite{AvisF96} have been proposed for subclasses of interval graphs and permutation graphs~\cite{SOYU12,SYKU10,YamazakiQU21,YamazakiSKU20}.
Those algorithms generate graphs in time polynomial in the number of vertices per graph. 
In this regard, those algorithms are considered to be fast in theory. 
However, since those algorithms output graphs one by one and the numbers of graphs in these classes are vast, the total running time will be impractically long, and storing the output graphs requires a large amount of space. 

The idea of using \emph{binary decision diagrams (BDDs)} has been studied to overcome the difficulty of the high complexity of enumeration.
BDDs can be seen as indexing and compressed data structures for various types of data, including graphs, via reasonable encodings.  
The technique so-called \emph{frontier-based search}, given an arbitrary graph, efficiently constructs a BDD which represents all subgraphs satisfying a specific property~\cite{KawaharaIIM17,Knuth2014art,SekineIT95}. Among those, Kawahara et al.~\cite{KawaharaSSY19} proposed enumeration algorithms for several sorts of intersection graphs, e.g., chordal and interval graphs.
Using the obtained BDD, one can easily count the number of those graphs, generate a graph uniformly at random, and find an optimal one under some measurement, like the minimum weight.
However, the enumeration by those algorithms is labeled.
In other words, the obtained BDDs by those algorithms may have many isomorphic graphs.
Hence, the technique cannot be used, for example, for generating a graph at uniformly random when taking isomorphism into account.

This paper proposes polynomial-time algorithms for unlabeled intersection graph enumeration using BDDs.
The five intersection graph classes in concern are those of proper interval, cochain, bipartite permutation, (bipartite) chain, and threshold graphs. 
It is known that the unlabeled graphs with $n$ vertices of these classes have natural $\mathrm{O}(n)$-bit string encodings: We require $2n$ bits for proper interval and bipartite permutation graphs~\cite{SOYU12,SYKU10} and $n$ bits for chain, cochain, and threshold graphs~\cite{Mahadev1995threshold,PeledS95}. 
It may be a natural idea for enumerating those graphs to construct a BDD that represents those encoding strings. 
Here, we remark that there are different strings that represent isomorphic graphs, and we need to keep only a ``canonical'' one among those strings.
Actually, if we make a BDD naively represent those canonical strings, the resultant BDD will be exponentially large.
To solve the problem, we introduce new string encodings of intersection graphs of the respective classes so that the sizes of the BDDs representing canonical strings are polynomial in $n$.
Our encodings are still natural enough to extend the enumeration technique to more elaborate tasks:
namely, enumerating graphs with bounded maximum (bi)clique size and/or with maximum number of edges. 
One application of enumerating proper interval graphs with maximum clique size $k$ is, for example, to enumerate graphs with the bandwidth at most $k$.
Recall that the bandwidth of a graph is the minimum size of the maximum cliques in the proper interval graphs obtained by adding edges.
Thus, conversely, we can obtain graphs of bandwidth at most $k$ by removing edges from the enumerated graphs.

\section{Preliminary}
\noindent \textbf{Graphs. }
Let $G=(V, E)$ be a simple graph with $n$ vertices and $m$ edges. 
A sequence $P=(v_1, v_2, \dots, v_k)$ of vertices is a \emph{path from $v_1$ to $v_k$} if $v_i$ and $v_j$ are distinct for $i\neq j$ and $(v_i, v_{i+1})\in E$ for $i\in \{1, \dots, k-1\}$. 
The graph $G$ is \emph{connected} if for every two vertices $v_i, v_j\in V$, there exists a path from $v_i$ to $v_j$. 
The neighbor set of a vertex $v$ is denoted by $N(v)$, and the closed neighbor set of $v$ is denoted by $N[v] = N(v)\cup \{v\}$. 
A vertex $v$ is \emph{universal} if $|N(v)| = n-1$ and a vertex $v$ is \emph{isolate} if $|N(v)| = 0$. 
For $V'\subseteq V$ and $E'\subseteq E$ such that the endpoints of every edge in $E'$ are in $V'$, $G'=(V', E')$ is a \emph{subgraph} of $G$. 
The graph $G$ is \emph{complete} if every vertex is universal. If a subgraph $G'=(V', E')$ of $G$ is a complete graph, $V'$ is called a \emph{clique} of $G$. 
A clique $C$ is \emph{maximum} if for any clique $C'$ in $G$, $|C|\geq |C'|$. 
A vertex set $S$ is called an \emph{independent} set if for each $v\in S$, $N(v) \cap S = \emptyset$. 
The \emph{complement} of $G =(V,E)$ is the graph $\overline{G}=(V, \overline{E})$ where $\overline{E} = \{(u, v)\mid (u, v)\notin E\}$.

For a graph $G=(V, E)$, let $(X, Y)$ be a partition of $V$; that is, $V=X\cup Y$ and $X\cap Y=\emptyset$. 
A graph $G=(X\cup Y, E)$ is \emph{bipartite} if for every edge $(u, v)\in E$, either $u\in X$ and $v\in Y$ or $u\in Y$ and $v\in X$ holds. 
The bipartite graph $G$ is \emph{complete bipartite} if $E = \{(x, y)\mid x\in X, y\in Y\}$. 
For a subgraph $G'=(X'\cup Y', E')$ of $G$, $X'\cup Y'$ is called \emph{biclique} if $G'$ is complete bipartite. 
A biclique $B$ is \emph{maximum} if for any biclique $B'$ in $G$, $|B|\geq |B'|$. 
Note that we here say that a biclique has the ``maximum'' size if the number of not edges but vertices of it is maximum. 
For a bipartite graph $G=(X\cup Y, E)$, $\overline{G}$ is called \emph{cobipartite}. 
Note that $X$ and $Y$ are cliques in $\overline{G}$. 
An ordering $x_1, x_2, \dots, x_{|X|}$ on $X$ is an \emph{inclusion ordering} if $N(x_i)\cap Y \subseteq N(x_j)\cap Y$ for every $i, j$ with $i < j$. 

\smallskip
\noindent \textbf{Binary strings. }
We use the binary alphabet $\Sigma = \{\mtt{L}, \mtt{R}\}$ in this paper. 
Let $s = c_1 c_2\dots c_n$ be a binary string on $\Sigma^*$. 
The length of $s$ is $n$ and we denote it by $|s|$. 
Let $\overline{\mtt{L}} = \mtt{R}$ and $\overline{\mtt{R}} = \mtt{L}$. 
For a string $s = c_1 c_2 \dots c_n$, we define $\overline{s} = \overline{c_n}~\overline{c_{n-1}} \dots \overline{c_1}$. 
The \emph{height} $h_s(i)$ of $s$ at $i\in \{0, 1, \dots, n\}$ is defined by
$h_s(i) = |c_1 \dots c_i|_\mtt{L}-|c_1 \dots c_i|_\mtt{R}$, where $|t|_c$ denotes the number of occurrences of $c$ in a string $t$.
The string $s$ is \emph{balanced} if $h_s(n) = 0$; that is, the number of $\mtt{L}$ is equal to that of $\mtt{R}$ in $s$. 
The \emph{height} of $s$ is the maximum value in the height function for $s$ and denoted by $h(s)$; that is, $h(s) = \max_i{h_s(i)}$. 
We say $s$ is \emph{larger} than a string $s'$ with length $n$ if there exists an index $i\in \{1, \dots, n\}$ such that $h_s(i') = h_{s'}(i')$ for any $i' < i$ and $h_s(i) > h_{s'}(i)$, and we denote it by $s > s'$. 
The \emph{alternate} string $\alpha(s)$ of $s$ is obtained by reordering the characters of $s$ from outside to center, alternately; that is, $\alpha(s) = c_1 c_n c_2 c_{n-1} \dots c_{\lceil n/2 \rceil}$ if $n$ is odd and $\alpha(s) = c_1 c_n c_2 c_{n-1} \dots c_{n/2} c_{n/2+1}$ otherwise.

\smallskip
\noindent \textbf{Binary decision diagrams. }\label{sec:BDD}
A \emph{binary decision diagram (BDD)} is an edge labeled directed acyclic graph $D = (N,A)$ that classifies strings over a binary alphabet $\Sigma$ of a fixed length $n$.
To distinguish BDDs from the graphs we enumerate, we call elements of $N$ \emph{nodes} and those of $A$ \emph{arcs}.
The nodes are partitioned into $n+1$ groups: $N = N_1 \cup \dots \cup N_{n+1}$.
Nodes in $N_i$ are said to be at \emph{level $i$} for $1 \le i \le n+1$.
There is just one node at level $1$, called the \emph{root}.
\begin{wrapfigure}{r}{0.3\textwidth}
\begin{tikzpicture}[xscale=0.5,yscale=-0.5,thick,shape=circle,inner sep=0pt,minimum size=4mm]
\newcommand{\point}[3][]{	\draw[#1] (#2,#3-0.2) -- (#2,#3+0.2);
}
\newcommand{\interval}[4][]{	\draw[#1] (#2,#4) -- (#3,#4);
}
\newcommand{\lrinterval}[4][]{	\interval[#1]{#2}{#3}{#4}
	\point[#1]{#2}{#4}
	\point[#1]{#3}{#4}
	\node () at (#2,#4-0.5) {$\mtt{L}$};
	\node () at (#3,#4-0.5) {$\mtt{R}$};
}
\node[draw] (0) at (0.5,5.6) {$0$};
\node[draw] (1) at (3.5,5.6) {$1$};
\node[draw] (3) at (3,4.2) { };
\node[draw] (2L) at (0,2.8) { };
\node[draw] (2R) at (4,2.8) { };
\node[draw] (1L) at (2,1.5) { };
\node[draw] (r) at (2,0) { };
\draw[->] (r) to node[right] {$\mtt{L}$} (1L);
\draw[->,out=180,in=225] (r) to node[left] {$\mtt{R}$} (0);
\draw[->] (1L) to node[above] {$\mtt{L}$} (2L);
\draw[->] (1L) to node[above] {$\mtt{R}$} (2R);
\draw[->] (2L) to node[left] {$\mtt{L}$} (0);
\draw[->] (2L) to node[near start, above] {$\mtt{R}$} (3);
\draw[->] (2R) to node[right] {$\mtt{L}$} (3);
\draw[->,out=195,in=285] (2R) to node[near start,above] {$\mtt{R}$} (0);
\draw[->] (3) to node[above] {$\mtt{L}$} (0);
\draw[->] (3) to node[right] {$\mtt{R}$} (1);
\end{tikzpicture}
\caption{An example BDD.\label{fig:bdd}}
\end{wrapfigure}
Level $(n+1)$ nodes are only two: the $0$-terminal node and the $1$-terminal node.
Each node in $N_i$ for $i \le n$ has two outgoing arcs pointing at nodes in $N_{i+1} \cup N_{n+1}$.
Thus, the length of every path from the root to a node in $N_i$ is just $i-1$ for $i \le n$.
The terminal nodes have no outgoing arcs.
The two arcs from a node have different labels from $\Sigma$. We call those arcs $\mtt{L}$-arc and $\mtt{R}$-arc. 
When a string $s = c_1 \dots c_n$ is given, we follow the arcs labeled $c_1,\dots,c_n$ from the root node.
If we reach the $1$-terminal, then the input is accepted.
If we reach the $0$-terminal, it is rejected.
One may reach a terminal node before reading the whole string.
In that case, we do not care the rest unread suffix of the string, and classify the whole string in accordance with the terminal node.
\figurename~\ref{fig:bdd} shows an example BDD, where $\mtt{LRLR}$ and $\mtt{LLRR}$ are accepted and $\mtt{LLRL}$ and $\mtt{RLRL}$ are rejected.

\section{Algorithms}

\subsection{Proper interval graphs and cochain graphs}\label{sec:pi}

\noindent \textbf{Definition and properties of proper interval graphs. }
A graph $G=(V, E)$ with $V=\{v_1, \dots, v_n\}$ is an \emph{interval} graph if there exists a set of $n$ intervals $\I=\{I_1, \dots, I_n\}$ such that $(v_i, v_j)\in E$ iff $I_i\cap I_j \neq \emptyset$ for $i, j\in \{1, \dots, n\}$. 
The set $\I$ of intervals is called an \emph{interval representation} of $G$. 
For an interval $I$, we denote the left and right endpoints of $I$ by $l(I)$ and $r(I)$, respectively. 
Without loss of generality, we assume that any two endpoints in $\I$ are distinct. 
An interval representation $\I$ is \emph{proper} if there are no two distinct intervals $I_i$ and $I_j$ in $\I$ such that $l(I_i) < l(I_j) < r(I_j) < r(I_i)$ or $l(I_j) < l(I_i) < r(I_i) < r(I_j)$. 
A graph $G$ is \emph{proper interval} if it has a proper interval representation. 

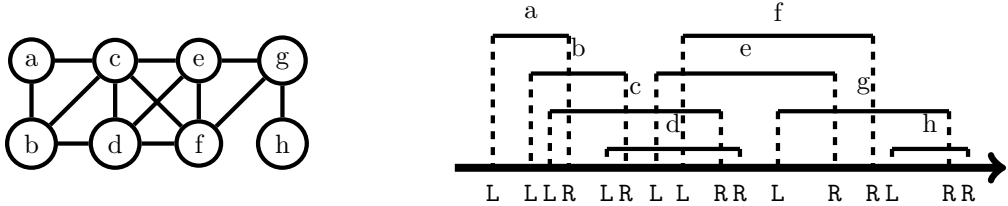
\begin{figure}[tbh]
 \centering
 \begin{minipage}{.35\textwidth}
  \centering
    \begin{tikzpicture}[scale=.55,circle,line width=1.6pt]
   \node[draw] (a) at (1, 2) {a};
   \node[draw] (b) at (1, 0) {b};
   \node[draw] (c) at (3, 2) {c};
   \node[draw] (d) at (3, 0) {d};
   \node[draw] (e) at (5, 2) {e};
   \node[draw] (f) at (5, 0) {f};
   \node[draw] (g) at (7, 2) {g};
   \node[draw] (h) at (7, 0) {h};

   \draw (a) -- (b);
   \draw (a) -- (c);
   \draw (b) -- (c);
   \draw (b) -- (d);
   \draw (c) -- (d);
   \draw (c) -- (e);
   \draw (c) -- (f);
   \draw (d) -- (e);
   \draw (d) -- (f);
   \draw (e) -- (f);
   \draw (e) -- (g);
   \draw (f) -- (g);
   \draw (g) -- (h);
  \end{tikzpicture}
 \end{minipage}
 \begin{minipage}{.64\textwidth}
  \centering
  \begin{tikzpicture}[scale=.50,circle,line width=1.6pt]
   \draw[->,line width=3pt] (0, 0.5) -- (14.5, 0.5);
   
   \draw (1, 4) -- node[midway, above] {a} (3, 4);
   \draw (2, 3) -- node[midway, above] {b} (4.5, 3);
   \draw (2.5, 2) -- node[midway, above] {c} (7, 2);
   \draw (4, 1) -- node[midway, above] {d} (7.5, 1);
   \draw (5.3, 3) -- node[midway, above] {e} (10, 3);
   \draw (6, 4) -- node[midway, above] {f} (11, 4);
   \draw (8.5, 2) -- node[midway, above] {g} (13, 2);
   \draw (11.5, 1) -- node[midway, above] {h} (13.5, 1);

   \begin{scope}[dashed]
   \draw (1, 4) -- (1, 0.5) node[below] {$\L$};
   \draw (2, 3) -- (2, 0.5) node[below] {$\L$};
   \draw (2.5, 2) -- (2.5, 0.5) node[below] {$\L$};
   \draw (4, 1) -- (4, 0.5) node[below] {$\L$};
   \draw (5.3, 3) -- (5.3, 0.5) node[below] {$\L$};
   \draw (6, 4) -- (6, 0.5) node[below] {$\L$};
   \draw (8.5, 2) -- (8.5, 0.5) node[below] {$\L$};
   \draw (11.5, 1) -- (11.5, 0.5) node[below] {$\L$};

   \draw (3, 4) -- (3, 0.5) node[below] {$\R$};
   \draw (4.5, 3) -- (4.5, 0.5) node[below] {$\R$};
   \draw (7, 2) -- (7, 0.5) node[below] {$\R$};
   \draw (7.5, 1) -- (7.5, 0.5) node[below] {$\R$};
   \draw (10, 3) -- (10, 0.5) node[below] {$\R$};
   \draw (11, 4) -- (11, 0.5) node[below] {$\R$};
   \draw (13, 2) -- (13, 0.5) node[below] {$\R$};
   \draw (13.5, 1) -- (13.5, 0.5) node[below] {$\R$};
   \end{scope}
  \end{tikzpicture}
 \end{minipage}
 \caption{Proper interval graph and its proper interval representation. The string representation of the proper interval representation is $\L\L\L\R\L\R\L\L\R\R\L\R\R\L\R\R$.}\label{fig:pig}
\end{figure}
Proper interval graphs can be represented by binary strings as follows. 
Let $G$ be a proper interval graph with $n$ vertices and $\I$ be a proper interval representation of $G$. 
We can represent $\I$ as a string by sweeping $\I$ from left to right and encoding $l(I)$ by $\mtt{L}$ and $r(I)$ by $\mtt{R}$, respectively. 
We denote the obtained string by $s(\I)$ and call it the \emph{string representation} of $\I$. 
The length of $s(\I)$ is $2n$. \begin{lemma}[\cite{SYKU10}]\label{lem:pig_string}
 Let $s(\I) = c_1 c_2 \dots c_{2n}$ be a string representation of a connected proper interval graph $G$ with $n$ vertices. 
 \begin{enumerate}
  \item $c_1 = \mtt{L}$ and $c_{2n} = \mtt{R}$, 
  \item $s(\I)$ is balanced; that is, the number of $\mtt{L}$ is same as that of $\mtt{R}$ in $s(\I)$, and
  \item $h_{s(\I)}(i) > 0$ for $i\in \{1, \dots, 2n-1\}$. 
 \end{enumerate}
\end{lemma}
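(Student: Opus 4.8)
The plan is to read everything off from the fact that $s(\I)=c_1\cdots c_{2n}$ records a left‑to‑right sweep over the $2n$ pairwise distinct endpoints of $\I$, with $c_k=\mtt{L}$ if the $k$‑th endpoint met is some $l(I)$ and $c_k=\mtt{R}$ if it is some $r(I)$. The single structural fact I would isolate first is that, for every $i\in\{0,1,\dots,2n\}$, the height $h_{s(\I)}(i)$ equals the number of intervals that are \emph{open} after $i$ steps of the sweep, i.e.\ the number of $I\in\I$ whose left endpoint lies among the first $i$ endpoints but whose right endpoint does not; in particular $h_{s(\I)}(i)\ge 0$ always. Everything else is a consequence of this reading.

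Items~1 and~2 then follow with essentially no work, and without using properness or connectivity. For item~2, each of the $n$ intervals contributes exactly one $\mtt{L}$ and one $\mtt{R}$ to the sweep, so $|s(\I)|_{\mtt{L}}=|s(\I)|_{\mtt{R}}=n$ and $h_{s(\I)}(2n)=0$. For item~1, the globally leftmost of the $2n$ endpoints cannot be some $r(I)$, since then $l(I)<r(I)$ would be an even smaller endpoint; hence $c_1=\mtt{L}$, and symmetrically $c_{2n}=\mtt{R}$.

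For item~3 I would argue by contradiction. Since $h_{s(\I)}(i)\ge 0$ everywhere, it suffices to exclude $h_{s(\I)}(i)=0$ for some $i$ with $1\le i\le 2n-1$. Given such an $i$, split $\I$ into the set $A$ of intervals both of whose endpoints lie among the first $i$ endpoints of the sweep and $B=\I\setminus A$. Because no interval is open after step $i$, no interval can have its left endpoint among the first $i$ endpoints and its right endpoint among the last $2n-i$; hence every interval of $B$ in fact has \emph{both} endpoints among the last $2n-i$ endpoints. Consequently the right endpoint of each interval in $A$ precedes, in the sweep, the left endpoint of each interval in $B$, so every interval of $A$ lies strictly to the left of every interval of $B$ on the line and the two families are pairwise disjoint; thus there is no edge of $G$ between the vertices corresponding to $A$ and those corresponding to $B$. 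This contradicts connectivity of $G$ once we know $A\neq\emptyset$ and $B\neq\emptyset$: the $(i+1)$‑th endpoint exists because $i+1\le 2n$, and its interval cannot lie in $A$, so $B\neq\emptyset$; and $c_1=\mtt{L}$ opens an interval whose right endpoint, if it were at a position exceeding $i$, would leave that interval open after step $i$ and force $h_{s(\I)}(i)\ge 1$ — impossible — so that interval lies in $A$.

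The argument is entirely elementary; the only points needing a moment's care are the two nonemptiness claims, and — more conceptually — the observation that an interior zero of the height function corresponds exactly to a place where the geometric picture separates into two non‑interacting blocks. That observation is the hinge of the whole proof, since it is what converts connectivity of $G$ into strict positivity of $h_{s(\I)}$ on $\{1,\dots,2n-1\}$.
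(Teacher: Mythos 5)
Your argument is correct, and it is self-contained, which is more than the paper offers: the paper states this lemma with a citation to~\cite{SYKU10} and gives no proof of its own, so there is no in-paper argument to compare against. Your hinge observation --- that $h_{s(\I)}(i)$ counts exactly the intervals opened but not yet closed after the $i$-th endpoint, because each $\mtt{R}$ among the first $i$ characters closes an interval whose $\mtt{L}$ necessarily occurred earlier --- is the right invariant, and it immediately yields items~1 and~2 (neither of which, as you note, needs properness or connectivity). For item~3 the contradiction is sound: an interior zero of the height forces every interval to have both endpoints entirely within the first $i$ positions or entirely within the last $2n-i$, the two blocks are then geometrically disjoint, and your two nonemptiness checks (the owner of the $(i+1)$-th endpoint lands in $B$; the interval opened by $c_1=\mtt{L}$ must close by position $i$, else the height would be positive, so it lands in $A$) close the gap, contradicting connectedness of $G$. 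The only remark worth adding is that your proof shows the statement holds for \emph{any} interval representation with distinct endpoints, not just proper ones --- a mild generalization of what is claimed, and consistent with the role the lemma plays later, where only these three string conditions are used for pruning.
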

A connected proper interval graph has at most two string representations~\cite{DHH96}. More strictly, 
 for any two string representations $s$ and $s'$ of a connected proper interval graph $G$, $s = s'$ or $s = \overline{s'}$. 
The string representation is said to be \emph{canonical} if $s>\overline{s}$ or $s=\overline{s}$. 
Thus, the canonical string representations have one-to-one correspondence to the proper interval graphs up to isomorphism~\cite{SYKU10}. 

\smallskip
\noindent \textbf{Algorithm for $n$ vertices. }
We here present an enumeration algorithm of all connected proper interval graphs with $n$ vertices up to isomorphism. 
We would like to construct a BDD representing all canonical string representations of proper interval graphs. 
However, for the efficiency of the BDD construction as described later, we instead construct a BDD representing alternate strings of all canonical representations of proper interval graphs. 

We describe an overview of our algorithm. We construct the BDD in a breadth-first manner
in the direction from the root node to the terminals. We create the root node in $N_1$, and for each node in $N_i \ (i\in \{1,\ldots,2n\})$, we create its $\mtt{L}$ and $\mtt{R}$-arcs
and make each arc point at one of the existing nodes in $N_{i+1}$ or $N_{2n+1}$ or a newly created node. We call making an arc point at 0-terminal node \emph{pruning}.
For each node $\nu$, we store into $\nu$ information on the paths from the root to $\nu$ as a tuple, which we call \emph{state}.
Two nodes having the same state never exist.
When creating an ($\mtt{L}$ or $\mtt{R}$) arc of a node,
we compute the state of the destination from the state of the original node.
If there is an existing node having the same state as the computed one,
we make the arc point at the existing node,
which we call \emph{(node) sharing}.

Consider deciding whether a string $s$ in $\Sigma^{2n}$ is canonical or not;
that is, $s>\overline{s}$ or $s=\overline{s}$ holds.
Suppose that $s = c_1 c_2 \dots c_{2n}$ and we have $\overline{s} = \overline{c_{2n}}\,\overline{c_{2n-1}} \dots \overline{c_1}$.
This can be done by comparing $c_i$ with $\overline{c_{2n - i + 1}}$ for $i = 1,\ldots,2n$.
When creating a node $\nu$ in the BDD construction process,
we would like to conduct pruning early if we can determine that all the path labels from the
root via $\nu$ will not be canonical.
That is the reason we adopt alternate string representations.
A node in level $i\ (\in \{1,\ldots,2n\})$ corresponds to the $\lceil i/2\rceil$th character in the string representation if $i$ is odd,
and the $(2n+1 - i/2)$th one otherwise.
For example, consider the path $\mtt{L}\mtt{R}\mtt{L}\mtt{R}\mtt{R}\mtt{R}$.
Any path extending $\mtt{LRLRRR}$ will represent a string of the form $s = \mtt{LLR} t \mtt{RRR}$ for which $\overline{s} = \mtt{LLL}\overline{t}\mtt{LRR}$ for some $t \in \Sigma^{*}$ and $s < \overline{s}$ holds.
This implies $s$ cannot be canonical. 
The path goes to the 0-terminal.

We make each node, say $\nu$, maintain state $(i, h_{\mathrm{L}}, h_{\mathrm{R}}, F)$.
The first element $i$ is the level where $\nu$ is.
We take an arbitrary path from the root node to $\nu$,
say $c_1 c_{2n} c_2 c_{2n-1} \dots c_{\lceil i/2 \rceil - 1} c_{2n+2-\lceil i/2 \rceil}$ (the case where $i$ is odd)
or $c_1 c_{2n} c_2 c_{2n-1} \dots \linebreak[1] c_{2n+2-\lceil i/2 \rceil} c_{\lceil i/2 \rceil}$ (the case where $i$ is even).
The second and third elements $h_{\mathrm{L}}, h_{\mathrm{R}}$ represent the heights of the sequences $c_1 c_2 \dots c_{\lfloor i/2 \rfloor}$
and $\overline{c_{2n}}\,\overline{c_{2n-1}} \dots \overline{c_{2n+2-\lceil i/2 \rceil}}$, respectively.
Note that we must design an algorithm so that it is well-defined; that is, the values of the sequences obtained from all the paths from the root node to $\nu$ are the same.
$F$ represents whether ($\star$) $c_{\hat{\imath}} = \overline{c_{2n+1-\hat{\imath}}}$ holds for all $\hat{\imath} = 1,\ldots, \lceil i/2 \rceil - 1$.
If $F = \top$, $(\star)$ does not hold; that is, there exists $i'$ such that $c_{i'} \neq \overline{c_{2n+1-i'}}$.
If $c_{i'} = \mtt{R}$ and $\overline{c_{2n+1-i'}} = \mtt{L}$, the canonicity condition does not meet.
As shown later, such a node never exists because we conduct the pruning.
Therefore, $F = \top$ means that $c_{i'} = \mtt{L}$, $\overline{c_{2n+1-i'}} = \mtt{R}$ and
$c_{i''} = \overline{c_{2n+1-i''}}$ holds for all $i'' \le i' - 1$, which implies that the canonicity condition
is satisfied whatever the other characters are. $F = \bot$ means that $(\star)$ holds.

We discuss how to store states and conduct pruning in the process of the BDD construction.
We make the root node have the state $(1, 0, 0, \bot)$.
Let $\nu$ be a node that has the state $(i, h_{\mathrm{L}}, h_{\mathrm{R}}, F)$ and $\nu_{\mtt{L}}$ and $\nu_{\mtt{R}}$ be nodes pointed at by $\mtt{L}$-arc and $\mtt{R}$-arc of $\nu$.
If $i = 1$, $\nu_{\mtt{R}}$ is 0-terminal, and if $i = 2$, $\nu_{\mtt{L}}$ is 0-terminal because of the condition (i) in Lemma~\ref{lem:pig_string}.
First, we consider the case where $i$ is odd. $\mtt{L}$-arc and $\mtt{R}$-arc of $\nu$ mean that the $\lceil i/2 \rceil$th character is $\mtt{L}$ and $\mtt{R}$, respectively.
We make $\nu_{\mtt{L}}$ have state $(i + 1, h_{\mathrm{L}}+1, h_{\mathrm{R}}, F)$.
As for $\mtt{R}$-arc, if $h_{\mathrm{L}}-1 \leq 0$, we make $\mtt{R}$-arc of $\nu$ point at 0-terminal because $\mtt{R}$-arc means $c_{\lceil i/2 \rceil} = \mtt{R}$ and the height of $c_1 c_2 \dots c_{\lfloor (i+1)/2 \rfloor}$ violates the condition of (iii) in Lemma~\ref{lem:pig_string}.
Otherwise, we make $\nu_{\mtt{R}}$ have state $(i + 1, h_{\mathrm{L}}-1, h_{\mathrm{R}}, F)$.
Next, we consider the case where $i$ is even.
$\mtt{L}$-arc and $\mtt{R}$-arc of $\nu$ mean that the $(2n + 1 - \lceil i/2 \rceil)$th character is $\mtt{L}$ and $\mtt{R}$, respectively.
If $F = \top$, we make $\nu_{\mtt{L}}$ and $\nu_{\mtt{R}}$ maintain states $(i + 1, h_{\mathrm{L}}, h_{\mathrm{R}}-1, \top)$ and $(i + 1, h_{\mathrm{L}}, h_{\mathrm{R}}+1, \top)$, respectively.
(Recall that since $F = \top$ means that the canonicity condition has already been satisfied,
we need not update $F$.)
We conduct pruning for $\mtt{L}$-arc if $h_{\mathrm{R}}-1 \le 0$.
Let us consider the case where $F = \bot$.
Recall that ($\star$) holds. Although we want to compare the $\lceil i/2 \rceil$th and
$(2n + 1 - \lceil i/2 \rceil)$th characters to decide whether the canonicity condition holds or not,
$\nu$ does not have the information on the $\lceil i/2 \rceil$th character.
Instead, $\nu$ has $h_{\mathrm{L}}$ and $h_{\mathrm{R}}$. We consider two cases (i) and (ii):
(i) If $h_{\mathrm{L}} - 1 = h_{\mathrm{R}}$, it means that the $\lceil i/2 \rceil$th character is $\mtt{L}$.
In this case, $\mtt{R}$-arc of $\nu$ means that the $(2n + 1 - \lceil i/2 \rceil)$th character is $\mtt{R}$,
which implies that ($\star$) still holds. Therefore, we make $\nu_{\mtt{R}}$ maintain state $(i + 1, h_{\mathrm{L}}, h_{\mathrm{R}}+1, \bot)$.
$\mtt{L}$-arc of $\nu$ means that the $(2n + 1 - \lceil i/2 \rceil)$th character is $\mtt{L}$,
which implies that ($\star$) no longer holds and the canonicity condition is satisfied. Therefore, we make $\nu_{\mtt{L}}$ maintain state $(i + 1, h_{\mathrm{L}}, h_{\mathrm{R}}-1, \top)$.
(ii) If $h_{\mathrm{L}} - 1 \neq h_{\mathrm{R}}$, it means that the $\lceil i/2 \rceil$th character is $\mtt{R}$.
In this case, $\mtt{R}$-arc of $\nu$ means that the $(2n + 1 - \lceil i/2 \rceil)$th character is $\mtt{R}$,
which violates the canonicity condition. We make $\mtt{R}$-arc of $\nu$ point at 0-terminal.
$\mtt{L}$-arc of $\nu$ means that the $(2n + 1 - \lceil i/2 \rceil)$th character is $\mtt{L}$,
which implies that ($\star$) still holds. Therefore, we make $\nu_{\mtt{L}}$ maintain state $(i + 1, h_{\mathrm{L}}, h_{\mathrm{R}}-1, \bot)$.

Consider the case where $i = 2n$ (final level).
Let the computed state as the destination of $\mtt{L}$- or $\mtt{R}$-arc of a node in $N_{2n}$ be $(2n + 1, h'_{\mathrm{L}}, h'_{\mathrm{R}}, F')$.
If $h'_{\mathrm{L}} \neq h'_{\mathrm{R}}$, the destination is pruned (0-terminal)
because it violates the condition of (ii) in Lemma~\ref{lem:pig_string}.
Otherwise, we make the arc point at 1-terminal.

\begin{theorem}
 Our algorithm constructs a BDD representing all canonical string representations of connected proper interval graphs in $\mathrm{O}(n^3)$ time and space. 
\end{theorem}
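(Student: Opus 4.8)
The statement combines a correctness claim --- that the constructed BDD accepts exactly the intended set of strings --- with the complexity bound $\mathrm{O}(n^3)$. Since the map $s\mapsto\alpha(s)$ is a bijection on $\Sigma^{2n}$, ``a BDD representing all canonical string representations of connected proper interval graphs on $n$ vertices'' means here that the BDD accepts $w\in\Sigma^{2n}$ if and only if $w=\alpha(s)$ for some canonical string representation $s$ of such a graph. The plan is to establish correctness through a single inductive invariant on node states, and to establish the complexity bound by counting how many distinct states can arise.

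For correctness I would first fix the invariant: for a node $\nu$ at level $i$ with state $(i,h_{\mathrm{L}},h_{\mathrm{R}},F)$ and any path from the root to $\nu$, writing $c_1,\dots,c_{\lceil i/2\rceil-1}$ and $c_{2n+2-\lceil i/2\rceil},\dots,c_{2n}$ (and also $c_{\lceil i/2\rceil}$ when $i$ is even) for the characters the path decodes under the level-to-index correspondence fixed before the theorem, (a) $h_{\mathrm{L}}$ is the height of $c_1\cdots c_{\lfloor i/2\rfloor}$ and $h_{\mathrm{R}}$ is the height of $\overline{c_{2n}}\,\overline{c_{2n-1}}\cdots\overline{c_{2n+2-\lceil i/2\rceil}}$; (b) the decoded partial string is still consistent with conditions (i) and (iii) of Lemma~\ref{lem:pig_string} insofar as its indices allow a check; and (c) $F=\top$ iff there is a least index $i'$ with $c_{i'}\neq\overline{c_{2n+1-i'}}$ and for that index $c_{i'}=\mtt{L}$ (so canonicity is already forced), while $F=\bot$ iff $c_{\hat{\imath}}=\overline{c_{2n+1-\hat{\imath}}}$ for all $\hat{\imath}\leq\lceil i/2\rceil-1$. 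The base case is the root state $(1,0,0,\bot)$. Part (a) is exactly the well-definedness the text flags: the heights named by a state agree with those computed along \emph{every} path reaching that node, so node sharing is legitimate and the transition depends only on the state, not on the path.

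The inductive step is a case analysis against the transition and pruning rules stated just before the theorem ($i$ odd; $i$ even with $F=\top$; $i$ even with $F=\bot$, split by whether $h_{\mathrm{L}}-1=h_{\mathrm{R}}$; and the final level $i=2n$). Most cases are immediate from the definitions of height and of ``$>$'' on strings; the delicate point is the even-level, $F=\bot$ case, where the state stores no explicit copy of $c_{\lceil i/2\rceil}$. There I would use the identity $h_{\mathrm{L}}-1=h_{\mathrm{R}}\iff c_{\lceil i/2\rceil}=\mtt{L}$, which holds because $F=\bot$ forces $c_1\cdots c_{\lceil i/2\rceil-1}$ and $\overline{c_{2n}}\cdots\overline{c_{2n+2-\lceil i/2\rceil}}$ to coincide character by character, hence to have equal height, so appending $c_{\lceil i/2\rceil}$ moves $h_{\mathrm{L}}$ to $h_{\mathrm{R}}+1$ or $h_{\mathrm{R}}-1$ according to that character. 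Granting this, one checks that every rule preserves (a); routes to the $0$-terminal exactly those extensions that can no longer satisfy (i) or (iii) of Lemma~\ref{lem:pig_string}, or for which the already-decided positions force $\overline{s}>s$; and turns $F$ to $\top$ exactly when the current pair of positions first makes $s>\overline{s}$ unavoidable. The final-level rule accepts iff the computed $h'_{\mathrm{L}}$ and $h'_{\mathrm{R}}$ are equal, which by (a) is condition (ii) (balancedness). Assembling the endpoints: a completed root-to-terminal path reaches the $1$-terminal iff its decoded $s$ satisfies (i)--(iii) of Lemma~\ref{lem:pig_string}, i.e.\ $s$ is the string representation of a connected proper interval graph, and $s\geq\overline{s}$, i.e.\ $s$ is canonical; and the word read along that path is $\alpha(s)$. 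That is the claimed language.

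For the complexity bound I would simply count states. A node at level $i$ has state $(i,h_{\mathrm{L}},h_{\mathrm{R}},F)$ where, by part (a), $h_{\mathrm{L}}$ is the height of a string of length $\lfloor i/2\rfloor$ and $h_{\mathrm{R}}$ the height of a string of length $\lceil i/2\rceil-1$, so $|h_{\mathrm{L}}|\leq\lfloor i/2\rfloor$ and $|h_{\mathrm{R}}|\leq\lceil i/2\rceil-1$ (both are in fact positive on the reachable part, by (i), (iii) and the pruning); with $F$ binary, there are $\mathrm{O}(i^2)$ possible states, hence $\mathrm{O}(i^2)$ nodes, at level $i$. Summing over $i=1,\dots,2n+1$ gives $\mathrm{O}(n^3)$ nodes and therefore $\mathrm{O}(n^3)$ arcs. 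Each destination state is computed from a source state by $\mathrm{O}(1)$ arithmetic, and an already-created node carrying a prescribed state is located in $\mathrm{O}(1)$ time by keeping, at each level, a table keyed by $(h_{\mathrm{L}},h_{\mathrm{R}},F)$; so the breadth-first construction takes $\mathrm{O}(n^3)$ time, and the BDD together with the auxiliary tables occupies $\mathrm{O}(n^3)$ space. The step I expect to be the real obstacle is the completeness half of correctness --- showing that \emph{every} string outside the target language does end at the $0$-terminal: soundness of each individual pruning rule is local and visible by inspection, but canonicity ($s\geq\overline{s}$) is a global comparison that the algorithm resolves incrementally while interleaving prefix and suffix positions, so the argument must carefully track how the flag $F$ and the surrogate test ``$h_{\mathrm{L}}-1=h_{\mathrm{R}}$'' for the unstored current character mesh with the parity-dependent, off-by-one indexing.
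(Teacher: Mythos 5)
Your proposal is correct and follows essentially the same route as the paper: the $\mathrm{O}(n^3)$ bound comes from counting states $(i,h_{\mathrm{L}},h_{\mathrm{R}},F)$ per level ($\mathrm{O}(n^2)$ each, with constant-time transitions and node lookup via per-level tables), exactly as in the paper's proof. Your inductive invariant for correctness (including the surrogate test $h_{\mathrm{L}}-1=h_{\mathrm{R}}$ in the even-level, $F=\bot$ case) is not in the paper's proof itself but merely formalizes the algorithm description preceding the theorem, so it is a welcome elaboration rather than a different approach.
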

\begin{proof}
We here analyze the complexity of the algorithm. 
For each level $i\in\{0, 1, \dots, 2n\}$, the number of nodes in $N_i$ is $\mathrm{O}(n^2)$ because $0\leq h_{\mathrm{L}}, h_{\mathrm{R}} \leq n$ and $F\in \{\bot, \top\}$. 
Thus, the total size of BDD is $\mathrm{O}(n^3)$. 
The computation of the next state for each node can be run in constant time because it has only increment and we can access the nodes in constant time by using $\mathrm{O}(n^2)$ pointers. 
\end{proof}

\smallskip
\noindent \textbf{Algorithm for maximum clique size $k$. }
We here present an algorithm that given natural numbers $n$ and $k$, enumerates all proper interval graphs with $n$ vertices and the maximum clique size at most $k$. 
It is well known that a clique of an interval graph $G$ corresponds to overlap intervals of a point in an interval representation of $G$~\cite{CLRS09}. 
The number of overlapping intervals is same as the height of string representation of a proper interval graph. 
Thus, the enumeration of all proper interval graphs with the maximum clique size at most $k$ can be seen as that of all canonical string representations with the height at most $k$. 
We modify the algorithm for $n$ vertices by adding one pruning for the case when either of the heights $h_\mrm{L}$ or $h_\mrm{R}$ becomes larger than $k$.
Therefore, our extended algorithm runs in $\mathrm{O}(k^2n)$ time and space since the ranges of $h_{\mathrm{L}}$ and $h_{\mathrm{R}}$ become $k$ from $n$.

\smallskip \noindent \textbf{Algorithm for $m$ edges. }
To extend the algorithm for $n$ vertices and $m$ edges, we here show how to count the number of edges from the string representation. 
Let $s$ be a string representation of a proper interval graph with $m$ edges. 
Sweeping the string representation from left to right, for each $i\in \{1, \dots, 2n\}$ with $c_i=\mtt{L}$, the height $h_s(i)$ is the number of intervals $I_j$ with $j <i$ that overlap with $i$. This means that the vertex $v$ corresponding to $c_i$ is incident to $h_s(i)$ edges in $G$. 
Thus, we obtain the number of edges from the string representation as follows. 
\begin{lemma}\label{lem:PIGE}
 Let $s=c_1 \dots c_{2n}$ be a string representation of a connected proper interval graph $G$ with $m$ edges and $J$ be the set of indices $i$ of $s$ such that $c_i = \mtt{L}$. 
 The summation of heights in $J$ is equal to $m$; i.e., $\sum_{i\in J}{h_s(i)} = m$. 
 \end{lemma}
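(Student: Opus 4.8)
The plan is to read $\sum_{i\in J} h_s(i)$ as an edge count in which every edge of $G$ is tallied exactly once. The input I would use is the local fact announced just before the lemma: for an index $i$ with $c_i=\mtt{L}$, the height $h_s(i)$ equals the number of intervals whose left endpoint precedes position $i$ in the left-to-right sweep and that overlap the interval whose left endpoint sits at position $i$ -- equivalently, the number of edges joining the vertex $v$ encoded by $c_i$ to the vertices whose intervals open before $v$'s. Granting this, the lemma becomes a double-counting bookkeeping.

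Concretely, I would fix a proper interval representation $\I$ with $s=s(\I)$ and note that the indices in $J$ correspond bijectively to $V$: reading $c_i=\mtt{L}$ means the sweep is passing the left endpoint of a unique interval, whose vertex I call $v_i$. Summing the local fact over $i\in J$, the total $\sum_{i\in J} h_s(i)$ counts, for every vertex $v$, the number of edges from $v$ to vertices whose interval opens earlier than $v$'s. Since the $n$ left endpoints occur in a strict order (all interval endpoints being distinct), each edge $\{u,w\}\in E$ has exactly one endpoint, say $w$, whose interval opens later; that edge is counted precisely once, at the index $i\in J$ with $v_i=w$, and at no other index. Hence every edge contributes exactly $1$, so $\sum_{i\in J} h_s(i)=|E|=m$. (The quantity and the claim involve only data read off $s$, so there is nothing representation-dependent to check even though $s$ may arise from two proper interval representations.)

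I expect the one genuinely delicate point to be the local fact itself -- the identification, at an $\mtt{L}$-position $i$, of $h_s(i)$ with the ``back-degree'' of $v_i$; once it is in hand the summation is routine. Proving it from first principles comes down to (a) observing that the intervals that are ``open'' when the sweep reaches position $i$ are exactly those $I'$ with $l(I')<l(I_i)<r(I')$, which by distinctness of endpoints is the same as the conjunction $I'\cap I_i\neq\emptyset$ and $l(I')<l(I_i)$, and (b) reading the number of such open intervals off the definition $h_s(i)=|c_1\cdots c_i|_{\mtt{L}}-|c_1\cdots c_i|_{\mtt{R}}$, where one must be careful about whether the just-opened interval $I_i$ is itself among them and reconcile this with the definition of the height function. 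That reconciliation is the step I would verify most carefully.
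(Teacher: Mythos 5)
Your plan is the same double-counting argument the paper itself sketches in the paragraph preceding the lemma (charge each edge to its later-opening endpoint, so the sum of ``back-degrees'' over the $\mathtt{L}$-positions counts every edge once), and that part of your write-up is fine. The problem is exactly the step you flagged and then deferred: the ``local fact'' you grant is false under the paper's definition of height, and once you do the reconciliation you discover that the lemma as stated does not hold. By definition $h_s(i)=|c_1\cdots c_i|_{\mathtt{L}}-|c_1\cdots c_i|_{\mathtt{R}}$ includes the character $c_i$ itself, so at an $\mathtt{L}$-position $i$ the quantity $h_s(i)$ counts the just-opened interval $I_i$ together with the earlier-opened intervals still open; the number of neighbors of $v_i$ whose intervals open earlier is therefore $h_s(i)-1$ (equivalently $h_s(i-1)$), not $h_s(i)$. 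Summing your (correct) charging argument hence gives $\sum_{i\in J}\bigl(h_s(i)-1\bigr)=m$, i.e.\ $\sum_{i\in J}h_s(i)=m+n$, since $|J|=n$. The example of Figure~\ref{fig:pig} confirms this: for $s=\mathtt{LLLRLRLLRRLRRLRR}$ the heights at the $\mathtt{L}$-positions are $1,2,3,3,3,4,3,2$, summing to $21=13+8=m+n$, not $m=13$.

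So your proof, carried to completion, establishes a corrected identity rather than the stated one; leaving the reconciliation as ``the step I would verify most carefully'' means the proposal as written assumes the false version and does not close. Note that the corrected identity is what the paper's algorithm actually implements: on an $\mathtt{L}$-arc it adds $h_{\mathrm{L}}$, the height of the prefix \emph{before} the new character (and $h_{\mathrm{R}}-1$ on the suffix side), which is $h_s(i)-1$ in the lemma's notation. The remedy is either to restate the lemma as $\sum_{i\in J}\bigl(h_s(i)-1\bigr)=m$ (equivalently $\sum_{i\in J}h_s(i-1)=m$) or to read the heights just before each $\mathtt{L}$; with that fix your summation argument is complete and is essentially the paper's intended justification.
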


In the construction of a BDD, each node stores the value to maintain the number of edges $m'$.
The state of each node is now a quintuple $(i, h_{\mathrm{L}}, h_{\mathrm{R}}, F, m')$. 
For the $\mtt{L}$-arc of a node $\nu$, the number of edges $m'$ is updated to $m'+h_{\mathrm{L}}$ if $i$ is odd and to $m'+h_{\mathrm{R}}-1$ otherwise. 
When either $i$ is odd and $m'+h_{\mathrm{L}} > m$ or $i$ is even and $m'+h_{\mathrm{R}}-1 > m$ holds, we make the $\mtt{L}$-arc of $\nu$ point at the 0-terminal since the number of edges is larger than $m$. 
We make each arc point at the $1$-terminal if it gives a state $(2n+1,h,h,F,m)$ for some $h$ and $F$ based on the state updating rule.
Otherwise, it must point at the $0$-terminal.
For each $i\in\{1, \dots, 2n\}$, 
the number of nodes in $N_i$ is $\mathrm{O}(n^2m)$ since $0\leq h_{\mathrm{L}}, h_{\mathrm{R}} \leq n$ and $0\leq m'\leq m$ and the number of levels is $2n$. 
Therefore, the algorithm runs in $\mathrm{O}(n^3m)$ time. \begin{theorem}
 A BDD representing all connected proper interval graphs with $n$ vertices and maximum clique size $k$ and with $n$ vertices and $m$ edges can be constructed in $\mathrm{O}(k^2n)$ time and $\mathrm{O}(n^3m)$ time, respectively. 
\end{theorem}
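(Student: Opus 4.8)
The plan is to prove the two bounds separately, in each case assembling the correctness argument from the base theorem for $n$ vertices together with the modification already described, and then reading off the time and space bound from the range of the enlarged state.

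\textbf{Clique size $k$.} First I would record the combinatorial fact that for an interval graph a clique is a set of intervals through a common point, so that for a \emph{proper} interval representation the maximum number of mutually overlapping intervals is exactly $h(s(\I))$; hence among canonical strings, those whose graph has maximum clique size at most $k$ are precisely those of height at most $k$. Next I would note that along any root-to-$\nu$ path the quantities $h_{\mathrm{L}}, h_{\mathrm{R}}$ are prefix heights of (suitably complemented and reversed) blocks of the encoded string, and that, as $i$ ranges over all levels, the values $h_{\mathrm{L}}$ (odd levels) and $h_{\mathrm{R}}$ (even levels) together realize every prefix height $h_s(0),\dots,h_s(2n)$ of the whole string. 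Consequently, adding the single pruning rule ``redirect an arc to the $0$-terminal whenever the destination state would have $h_{\mathrm{L}} > k$ or $h_{\mathrm{R}} > k$'' deletes exactly the accepted strings of height exceeding $k$ and no others; correctness of the modified BDD then follows from the base theorem. For the bound, the state is still a quadruple $(i, h_{\mathrm{L}}, h_{\mathrm{R}}, F)$ but now $0 \le h_{\mathrm{L}}, h_{\mathrm{R}} \le k$ and $F \in \{\bot, \top\}$, so $|N_i| = \mathrm{O}(k^2)$; with $\mathrm{O}(n)$ levels and $\mathrm{O}(1)$ work per transition (increments plus $\mathrm{O}(1)$ lookups through $\mathrm{O}(k^2)$ pointers), the BDD is built in $\mathrm{O}(k^2 n)$ time and space.

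\textbf{$m$ edges.} Here I would invoke Lemma~\ref{lem:PIGE}, which writes $m$ as a sum of heights over the $\mtt{L}$-positions of the string, and augment the state to the quintuple $(i, h_{\mathrm{L}}, h_{\mathrm{R}}, F, m')$, where $m'$ accumulates those contributions as the string is read from both ends inward: on an $\mtt{L}$-arc at an odd level add $h_{\mathrm{L}}$, on an $\mtt{L}$-arc at an even level add $h_{\mathrm{R}}-1$, and on $\mtt{R}$-arcs leave $m'$ unchanged. I would verify that $m'$ is well defined exactly as $h_{\mathrm{L}}, h_{\mathrm{R}}$ are: all root-to-$\nu$ paths fix the same left block $c_1\cdots c_{\lfloor i/2\rfloor}$ and the same right block, hence the same $\mtt{L}$-positions within them and the same heights at those positions. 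Since all heights are nonnegative, $m'$ is nondecreasing along every path, so redirecting an arc to the $0$-terminal as soon as $m' > m$ is sound and discards no string with exactly $m$ edges; dually, a last-level arc goes to the $1$-terminal exactly when its destination state is $(2n+1, h, h, F, m)$ for some $h, F$, and to the $0$-terminal otherwise. Correctness is then the conjunction of the base theorem (the word is a canonical proper interval representation) and Lemma~\ref{lem:PIGE} (the accumulated $m'$ equals the edge count). For the bound, $0 \le h_{\mathrm{L}}, h_{\mathrm{R}} \le n$, $0 \le m' \le m$, and $F \in \{\bot,\top\}$, so $|N_i| = \mathrm{O}(n^2 m)$; with $\mathrm{O}(n)$ levels and $\mathrm{O}(1)$ per transition, the construction takes $\mathrm{O}(n^3 m)$ time and space.

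The main obstacle I expect is the per-$\mtt{L}$ accounting in the augmented state: one must check that an $\mtt{L}$ decided at an even level contributes exactly $h_{\mathrm{R}}-1$ (and not $h_{\mathrm{R}}$ or $h_{\mathrm{R}}+1$) under the alternation/complement conventions, because an even-level character encodes $c_{2n+2-\lceil i/2\rceil}$, which enters $\overline{s}$ complemented and reversed, and this must be matched against the height $h_s$ appearing at the corresponding $\mtt{L}$-position in Lemma~\ref{lem:PIGE}. Everything else — well-definedness, the monotonicity that justifies pruning on $m' > m$, and the level-by-level node counts — is routine bookkeeping along the lines already given in the text. A tidy presentation would first isolate, as a small lemma, the correspondence between accepting paths of the modified BDD and canonical strings with the prescribed parameter, and then deduce the complexity from the state ranges.
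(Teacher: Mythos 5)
Your proposal is correct and follows essentially the same route as the paper: for clique size it adds the single pruning $h_{\mathrm{L}}>k$ or $h_{\mathrm{R}}>k$ on top of the base construction and reads off $\mathrm{O}(k^2n)$ from the shrunken state space, and for $m$ edges it augments the state to $(i,h_{\mathrm{L}},h_{\mathrm{R}},F,m')$ with exactly the paper's update rule (add $h_{\mathrm{L}}$ on odd-level $\mtt{L}$-arcs, $h_{\mathrm{R}}-1$ on even-level ones), prunes on $m'>m$, and counts $\mathrm{O}(n^2m)$ states per level. The accounting you flag as the main obstacle is resolved the same way as in the paper: the source-node values contribute $h_s(i)-1$ per $\mtt{L}$-position (the number of intervals already open when the new one starts), which is the true edge count underlying Lemma~\ref{lem:PIGE}.
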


\smallskip \noindent \textbf{Cochain graphs. }
A graph $G=(X\cup Y, E)$ is a \emph{cochain} graph if $G$ is cobipartite and each of $X$ and $Y$ has an inclusion ordering. In other words, $X$ and $Y$ are cliques in $G$ and we have two orderings over $X=\{x_1,\dots,x_{n_X}\}$ and $Y=\{y_1,\dots,y_{n_Y}\}$ such that $(x_i,y_j) \in E$ implies $(x_{i'},y_{j'}) \in E$ for any $i \le i'$ and $j \le j'$.
It is well-known~\cite{Brandstadt:1999} that cochain graphs are a subclass of proper interval graphs.
Here, we give a concrete proper interval representation $\{I_1,\dots,I_{n_X},J_1,\dots,J_{n_Y}\}$ of $G$, where $x_i$ and $y_j$ correspond to $I_i$ and $J_j$, respectively, by
\begin{itemize}
	\item $l(I_1) < \dots < l(I_{n_X}) < r(I_1) < \dots < r(I_{n_X}) < r(J_{n_Y})$,
	\item $l(I_{n_X}) < l(J_{n_Y}) < \dots < l(J_{1}) < r(J_{n_Y}) < \dots < r(J_{1})$,
	\item $l(J_j) < r(I_i) $ iff $(x_i,y_j) \in E$ for $1 \le i \le n_X$ and $1 \le j \le n_Y$.
\end{itemize}
The inclusion ordering constraint guarantees that the above is well-defined and gives a proper interval representation.
Therefore, one can specify a cochain graph as a proper interval graph by a $2n$-bit string representation.
Moreover, the strong restriction of cochain graphs allows us to reduce the number of bits to specify a cochain graph.
Obviously, the first $n_X$ bits of the proper interval string representation of a cochain graph are all $\mtt{L}$ and the last $n_Y$ bits are all $\mtt{R}$.
Thus, those $n=n_X+n_Y$ bits are redundant and removable.
Indeed, one can recover the numbers $n_X$ and $n_Y$ from the remaining $n$ bits.
Since every surviving bit of $\mtt{R}$ corresponds to $r(I_i)$ for some $i$, the number of those bits is just $n_X$.
Similarly, $n_Y$ is the number of bits of $\mtt{L}$ in the new $n$-bit representation.
Conversely, every $n$-bit string $s$ can be seen as the string representation of a cochain graph with $n$ vertices.
However, the $n$-bit strings are not in one-to-one correspondence to the cochain graphs because universal vertices in the cochain graphs can be seen in either $X$ or $Y$. 
To avoid the duplication, we assume that all universal vertices are in $Y$, so we only consider $n$-bit strings without $\mtt{R}$ as a suffix. 
Using this $n$-bit string representation, we obtain an enumeration algorithm for cochain graph, and it runs in $\mathrm{O}(n)$ time. 

For the constraint problems, we use $2n$-bit strings because we need to compute the size of cliques or the number of edges. 
Our algorithms with constraints for cochain graphs are similar to that of proper interval graphs and need to recognize whether the strings represent cochain graphs. 

\begin{theorem}
 A BDD representing all canonical string representations of cochain graphs with $n$ vertices, $n$ vertices and maximum clique size $k$, and $n$ vertices and $m$ edges can be constructed in $\mathrm{O}(n)$, $\mathrm{O}(k^2n)$, and $\mathrm{O}(n^3m)$ time, respectively. 
\end{theorem}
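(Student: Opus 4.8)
The three bounds would be handled separately, each one reusing the corresponding proper interval construction as much as possible.

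\emph{Plain $n$-vertex enumeration.} Use the $n$-bit encoding. By the discussion above, every string in $\Sigma^{n}$ encodes a cochain graph and the canonical representatives are exactly the strings that do not have $\mtt{R}$ as a suffix; since this is a condition on the last symbol alone, the target BDD is a single ``line'': a node $\nu_i$ at each level $i\le n$, where $\nu_i$ with $i<n$ sends both arcs to $\nu_{i+1}$, and $\nu_n$ sends its $\mtt{L}$-arc to the $1$-terminal and its $\mtt{R}$-arc to the $0$-terminal. This has $\mathrm{O}(n)$ nodes and is built in $\mathrm{O}(n)$ time; neither height information nor the alternate-string reordering of Section~\ref{sec:pi} is needed.

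\emph{Enumeration with a constraint.} Here the $n$-bit encoding is useless because recovering a clique size or an edge count requires the interval geometry, so I would switch to the $2n$-bit proper interval string of the cochain graph, which has the shape $\mtt{L}^{n_X}\,w\,\mtt{R}^{n_Y}$ with $w$ the $n$-bit middle. The plan is to run the proper interval construction of Section~\ref{sec:pi} essentially verbatim --- the same state $(i,h_{\mathrm{L}},h_{\mathrm{R}},F)$, augmented by the counter $m'$ in the edge-count version, and the same pruning of $h_{\mathrm{L}},h_{\mathrm{R}}$ against $k$ in the clique version --- while superimposing a recognizer that additionally checks, as the string is being built, that it is the $2n$-bit representation of a cochain graph (that is, that the maximal $\mtt{L}$-prefix and the maximal $\mtt{R}$-suffix are jointly long enough to carve out a legitimate middle $w$) and that $w$ is canonical. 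In the alternate order the $\mtt{L}$-prefix is read at the first odd steps and the $\mtt{R}$-suffix at the first even steps, so the recognizer only has to remember a constant amount of data --- which of the phases (left prefix / middle / right suffix) each swept end is currently in, together with the bit that says whether the part of $w$ read so far currently ends with $\mtt{L}$ --- this being derivable from, or stored alongside, the information $(i,h_{\mathrm{L}},h_{\mathrm{R}})$ already present. Hence the number of nodes per level and the per-node work grow only by constant factors relative to the proper interval analysis, yielding $\mathrm{O}(k^2n)$ for the maximum-clique-size-$k$ version and $\mathrm{O}(n^3m)$ for the $m$-edge version.

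The step I expect to be the crux is precisely this recognizer: checking that its contribution to the state is well defined (the same for all root-to-node paths), that it delimits the prefix, the middle and the suffix correctly even though $w$ is swept partly from the left and partly from the right in the alternate order, that the ``$w$ does not end with $\mtt{R}$'' test is evaluated at the right moment and stays consistent with the inherited canonicity bit $F$, and --- most importantly --- that all of this fits in $\mathrm{O}(1)$ extra state so that the claimed complexities hold. Once the recognizer is pinned down, the remaining work --- proving that a string is accepted iff it is the canonical $2n$-bit representation of a cochain graph with the prescribed clique size or edge count, and counting nodes level by level --- is routine and parallels the proper interval proofs.
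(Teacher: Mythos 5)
Your construction for the unconstrained case is incorrect: canonicity of the $n$-bit cochain encoding is not just ``no trailing $\mtt{R}$''. Swapping the two cliques $X$ and $Y$ turns a representation $s$ into $\overline{s}$ (reverse and complement), and these are in general two different strings neither of which ends in $\mtt{R}$. Concretely, $\mtt{RLL}$ and $\mtt{RRL}=\overline{\mtt{RLL}}$ both avoid a trailing $\mtt{R}$ and both encode the same $3$-vertex cochain graph (one edge plus an isolated vertex), so your ``line'' BDD accepts two strings for one isomorphism class; the suffix condition only resolves the placement of universal vertices, not the $X$--$Y$ swap. The paper's canonical form is $s=\mtt{L}^k t$ with $t\ge\overline{t}$ and $t$ neither starting with $\mtt{L}$ nor ending with $\mtt{R}$, and its $\mathrm{O}(n)$ algorithm therefore still reads $t$ in the alternate order and carries a constant-size state (a three-valued flag, the parity of $k$, and the previously read character) to test $t\ge\overline{t}$ on the fly; what is dropped relative to Section~\ref{sec:pi} is only the height bookkeeping, not the canonicity check.

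For the constrained versions your plan (run the proper-interval construction on the $2n$-bit strings and superimpose a recognizer for the shape $\mtt{L}^{n_X}w\mtt{R}^{n_Y}$ plus canonicity) coincides with the paper's plan, but the step you yourself flag as the crux is exactly what is missing, and your guess that $\mathrm{O}(1)$ extra state suffices does not match what the paper does: its recognizer adds an integer counter $C$, initialized to $n-2k-1$ once the length $k$ of the forced $\mtt{R}$-run is fixed and decremented while the remaining forced $\mtt{L}$'s are read, because after one of the two boundary runs ends one must still remember how much longer the other run is required to be, and that quantity is not recoverable from $(i,h_{\mathrm{L}},h_{\mathrm{R}})$ once later characters have changed the heights. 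You also do not address two further points: when there are two or more universal vertices, several distinct $2n$-bit strings representing the same graph survive the proper-interval test $s\ge\overline{s}$ (the different ways of splitting the universal vertices between $X$ and $Y$), so canonicalizing that split must be built into the recognizer as well; and cochain graphs may be disconnected, so condition (iii) of Lemma~\ref{lem:pig_string} cannot be imposed verbatim. In sum, the first bound is argued incorrectly, and the second and third are sketched along the right lines but with the decisive mechanism left open.
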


\subsection{Bipartite permutation graphs and chain graphs}
\noindent \textbf{Definition and properties of bipartite permutation graphs. }
Let $\pi$ be a permutation on $V$; that is,
$\pi$ is a bijection from $V$ to $\{1, \dots, n\}$.
We define $\overline{\pi}$ as $\overline{\pi}(v) = n + 1 - \pi(v)$ for all $v \in V$.
We denote by $\pi^{-1}$ the inverse of $\pi$.

A graph $G=(V, E)$ is \emph{permutation} if it has
a pair $(\pi_1, \pi_2)$ of two permutations on $V$ such that there exists an edge $(u, v)\in E$ iff
$(\pi_1(u) - \pi_1(v)) (\pi_2(u) - \pi_2(v)) < 0$.
The pair $\P = (\pi_1, \pi_2)$ can be seen as the following intersection model
on two parallel horizontal lines $L_1$ and $L_2$:
the vertices in $V$ are arranged on the line $L_1$ (resp.\ line $L_2$)
according to $\pi_1$ (resp.\ $\pi_2$).
Each vertex $w$ corresponds to a line segment $l_w$, which joins $w$
on $L_1$ and $w$ on $L_2$.
An edge $(u, v)$ is in $E$ iff $l_u$ and $l_v$ intersects,
which is equivalent to $(\pi_1(u) - \pi_1(v)) (\pi_2(u) - \pi_2(v)) < 0$.
The model $\P=(\pi_1, \pi_2)$ is called a \emph{permutation diagram}.
A graph $G$ is \emph{bipartite permutation} if $G$ is bipartite and permutation.

Let $\P=(\pi_1, \pi_2)$ be a permutation diagram of a connected bipartite permutation graph $G=(V, E)$.
Let us observe properties of $\pi_1$ and $\pi_2$, which are discussed in~\cite{SOYU12}.
First, there is no vertex $u \in V$ such that $\pi_1(u) = \pi_2(u)$ unless $n = 1$.
Secondly, for all vertices $u, v \in V$ such that
$\pi_1(u) < \pi_2(u)$, $\pi_1(v) < \pi_2(v)$ and $\pi_1(u) < \pi_1(v)$ hold,
$\pi_2(u) > \pi_2(v)$ does not hold; that is, $l_u$ and $l_v$ never intersects.
Therefore, $X = \{u \mid \pi_1(u) < \pi_2(u) \}$
and $Y = \{u \mid \pi_1(u) > \pi_2(u) \}$ give the vertex partition of $G$.
By expressing the above observation with the intersection model,
the line segments are never straight vertical and classified into $X$ and $Y$ depending on their tilt directions:
lines in $X$ go from upper left to lower right and those in $Y$ go from lower left to upper right.

Based on the above discussion, let us give a string representation $s(\P)$ of the permutation diagram $\P$.
We define $s_x(\P) = x_1 \dots x_n$ and $s_y(\P) = y_1 \dots y_n$ as follows:
For $i = 1,\ldots,n$, $x_i = \L$ if $\pi_1(\pi_1^{-1}(i)) (= i) < \pi_2(\pi_1^{-1}(i))$,
and $x_i = \R$ otherwise.
Similarly, for $i = 1,\ldots,n$, $y_i = \R$ if $\pi_2(\pi_2^{-1}(i)) (= i) > \pi_1(\pi_2^{-1}(i))$, and $y_i = \L$ otherwise.
In other words, $x_i = \L$ iff the $i$th intersection point of $L_1$ is with a line segment from $X$ in the intersection model.
On the other hand, $y_i = \L$ iff the $i$th intersection point of $L_2$ is with a line segment from $Y$.
We define the string representation $s(\P)$ of $\P$ by
$s(\P) = x_1y_1x_2y_2 \dots x_n y_n$.
The string representation $s(\P)$ has the following properties~\cite{SOYU12}.
\begin{lemma}\label{lem:bpg_string}
 Let $s = c_1 c_2 \dots c_{2n}$ be a string representation of a connected bipartite permutation graph $G$ with $n$ vertices. Then,
 \begin{enumerate}
  \item[(i)] $c_1 = \mtt{L}$ and $c_{2n} = \mtt{R}$,
  \item[(ii)] $s$ is balanced; that is, the number of $\mtt{L}$ is the same as that of $\mtt{R}$ in $s$, and
  \item[(iii)] $h_{s}(i) > 0$ for $i\in \{1, \dots, 2n-1\}$.
 \end{enumerate}
\end{lemma}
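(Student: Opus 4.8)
The plan is to work with the $n$ line segments of the permutation diagram $\P=(\pi_1,\pi_2)$ directly, using the vertex partition $X=\{u\mid \pi_1(u)<\pi_2(u)\}$ and $Y=\{u\mid \pi_1(u)>\pi_2(u)\}$ recalled above (so every segment of $X$ tilts one way and every segment of $Y$ the other), and reading each prefix height of $s=s(\P)$ as a count of segments straddling a cut of the diagram. I assume $n\ge 2$.

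Properties (i) and (ii) are immediate. The vertex $\pi_1^{-1}(1)$ sits at the leftmost position of $L_1$, so its $\pi_1$-value $1$ is minimal; since $\pi_1(u)\neq \pi_2(u)$ for all $u$, this vertex lies in $X$, hence $x_1=\L$ and $c_1=\L$. Symmetrically $\pi_2^{-1}(n)\in X$, so $y_n=\R$ and $c_{2n}=\R$. For (ii), exactly $|X|$ of $x_1,\dots,x_n$ equal $\L$ (those at positions of $L_1$ occupied by segments of $X$) and exactly $|Y|$ of $y_1,\dots,y_n$ equal $\L$; hence $s$ has $|X|+|Y|=n$ symbols $\L$ among its $2n$ symbols, so it is balanced.

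The substance is (iii). Fix $i\in\{1,\dots,2n-1\}$ and write $i=2j$ or $i=2j-1$. Reading $c_1\cdots c_i$ ``reveals'' positions $1,\dots,j$ of $L_1$ and positions $1,\dots,j'$ of $L_2$, with $j'=j$ if $i$ is even and $j'=j-1$ if $i$ is odd. Call a segment \emph{fully revealed} if both endpoints are revealed, \emph{$L_1$-only} if only its $L_1$-endpoint is, and \emph{$L_2$-only} if only its $L_2$-endpoint is. A short case analysis of the tilt inequalities $\pi_1<\pi_2$ (on $X$) and $\pi_1>\pi_2$ (on $Y$) shows that every $L_1$-only segment lies in $X$, every $L_2$-only segment lies in $Y$, and that each such segment contributes exactly $+1$ to $h_s(i)$ while fully revealed and unrevealed segments contribute $0$. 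Therefore $h_s(i)=a+b$, where $a$ and $b$ are the numbers of $L_1$-only and $L_2$-only segments. Since the $j$ segments with $L_1$-position $\le j$ split as (fully revealed) plus ($L_1$-only), and the $j'$ segments with $L_2$-position $\le j'$ split as (fully revealed) plus ($L_2$-only), subtracting gives $a-b=j-j'$; so $a=b+1$ when $i$ is odd and $a=b$ when $i$ is even.

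If $i$ is odd then $h_s(i)=a+b\ge a\ge 1>0$, without even invoking connectedness. If $i$ is even then $h_s(i)=2a$, so it remains to exclude $a=0$. Suppose $a=0$; then $b=0$ too, and every segment has both endpoints among positions $1,\dots,j$ or both among $j+1,\dots,n$, partitioning $V$ into nonempty parts $L$ of size $j$ and $R$ of size $n-j$ (using $1\le j\le n-1$). For $u\in L$ and $v\in R$ one has $\pi_1(u)<\pi_1(v)$ and $\pi_2(u)<\pi_2(v)$, so $(\pi_1(u)-\pi_1(v))(\pi_2(u)-\pi_2(v))>0$ and no edge joins $L$ and $R$, contradicting connectedness; hence $a\ge 1$ and $h_s(i)\ge 2>0$. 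The main obstacle is exactly this even case: one has to pin down the ``only''-revealed classes, confirm from the tilt inequalities that they sit inside $X$ and $Y$ so that the accounting $h_s(i)=a+b$ is valid, and then convert $a=0$ into a genuine disconnection of $G$. The odd case, by contrast, reduces to the parity identity $a=b+1$ and needs no graph-theoretic input.
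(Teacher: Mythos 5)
Your proof is correct. Note, however, that the paper itself offers no proof of this lemma: it is quoted as a known property of the string representation from the cited prior work (Saitoh et al.), so there is no in-paper argument to compare yours against. Your argument is a sound, self-contained replacement that uses only the facts the paper recalls just before the lemma: the tilt partition $X=\{u\mid \pi_1(u)<\pi_2(u)\}$, $Y=\{u\mid \pi_1(u)>\pi_2(u)\}$, and the absence of fixed points of $(\pi_1,\pi_2)$ for $n\ge 2$. The bookkeeping checks out: fully revealed segments contribute one $\mtt{L}$ and one $\mtt{R}$ to the prefix (net $0$) whichever side they tilt, the boundary case $\pi_1(u)\le j$, $\pi_2(u)=j$ at odd $i$ is handled by the no-fixed-point fact, the bijectivity of $\pi_1,\pi_2$ gives $a-b=j-j'$, and connectedness is invoked exactly where it is needed (even positions), with the $a=b=0$ situation correctly converted into a separation of the diagram into two non-crossing blocks and hence a disconnection of $G$. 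The only caveat is your standing assumption $n\ge 2$; for $n=1$ the literal definitions give $s=\mtt{R}\mtt{L}$ (the single segment is vertical), so the lemma as stated only makes sense under the same $n\ge 2$ proviso the paper itself flags when it says $\pi_1(u)\neq\pi_2(u)$ ``unless $n=1$'' --- it would be worth stating that explicitly rather than leaving it as an assumption in passing.
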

By horizontally, vertically, and rotationally flipping $\P$,
we obtain essentially equivalent diagrams $\P^{\mathrm{V}} = (\pi_2, \pi_1)$, $\P^{\mathrm{H}} = (\overline{\pi_1}, \overline{\pi_2})$, and $\P^{\mathrm{R}} = (\overline{\pi_2}, \overline{\pi_1})$ of $G$, respectively.
\begin{lemma}[\cite{SOYU12}]\label{lem:bpg_canonical}
 Let $\P_1$ and $\P_2$ be permutation diagrams of a connected bipartite permutation graph.
 At least one of the equations $s(\P_1) = s(\P_2)$, $s(\P_1) = s(\P_2^{\mathrm{V}})$, $s(\P_1) = s(\P_2^{\mathrm{H}})$, or $s(\P_1) = s(\P_2^{\mathrm{R}})$ holds.
\end{lemma}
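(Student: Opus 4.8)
The plan is to reduce the statement to a uniqueness property of the data that $s(\cdot)$ actually records, and then to establish that property through the structure theory of strong (staircase) orderings. Observe first that $s(\P)$ is completely determined by two pieces of information: the labelled bipartition $(X,Y)$ realised by $\P$, and, for each of the two lines $L_1$ and $L_2$, the left-to-right pattern of tilt directions of the segments meeting that line — the $i$th point of $L_1$ contributes $\L$ to $s$ exactly when its segment lies in $X$, and the $i$th point of $L_2$ contributes $\L$ exactly when its segment lies in $Y$. Since $\P^{\mathrm{V}}$, $\P^{\mathrm{H}}$, $\P^{\mathrm{R}}$ arise from $\P$ by the vertical flip, the horizontal flip, and the $180^\circ$ rotation, and these together with the identity form (a copy of) the Klein four-group acting on diagrams, it suffices to show that the pair (labelled bipartition, tilt patterns) of $\P_1$ agrees with that of $\P_2$ after applying one of these four flips to $\P_2$.

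First I would pin down the bipartition. Because $G$ is connected and bipartite, it has a unique bipartition $\{X,Y\}$ as an unordered pair, and the observation recalled before Lemma~\ref{lem:bpg_string} — segments of $X$ and of $Y$ tilt in opposite directions — shows that the two sides realised by $\P_2$ form exactly this partition, while $\P_2^{\mathrm{V}}$ (which swaps the two lines, hence inverts all tilts) realises it with the two labels exchanged. So, replacing $\P_2$ by $\P_2^{\mathrm{V}}$ if necessary, I may assume $\P_1$ and $\P_2$ realise the same labelled bipartition $(X,Y)$; since $\mathrm{R}$ preserves the bipartition while $\mathrm{V}$ and $\mathrm{H}$ swap it, it then remains to prove that in this situation $s(\P_1)=s(\P_2)$ or $s(\P_1)=s(\P_2^{\mathrm{R}})$, and composing the latter case with the bipartition-matching step recovers all four equalities of the lemma.

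Next I would pass from a diagram with fixed bipartition to an ordering. As quoted in the excerpt, two segments on the same side never cross, so the order of $X$ along $L_1$ coincides with its order along $L_2$; call it $<_X$, and define $<_Y$ analogously. A short check from the geometry shows that $x\sim y$ (for $x\in X$, $y\in Y$) holds exactly when the corresponding segments cross, and hence that the adjacency of $G$, written as a $0/1$ matrix with rows in $<_X$-order and columns in $<_Y$-order, is a staircase: each row's neighbourhood and each column's neighbourhood is an interval, and the left and right endpoints of the row-intervals are non-decreasing down the rows. Conversely, from such a staircase together with $(<_X,<_Y)$ the whole diagram — in particular the two tilt patterns, and thus $s(\P)$ — is uniquely reconstructed. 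Hence the claim reduces to: for connected $G$ with a fixed labelled bipartition, the staircase is unique up to simultaneously reversing $<_X$ and $<_Y$ (which is exactly the effect of the $\mathrm{R}$-flip on the ordering pair) and up to permutations within a block of rows, or of columns, sharing a common neighbourhood — and such permutations leave the tilt patterns, hence $s(\P)$, unchanged.

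That final uniqueness statement is the heart of the matter and the step I expect to be the main obstacle. The key point is that connectivity makes the staircase rigid: if the row-intervals $[a_i,b_i]$ had a gap, i.e.\ $a_{i+1}>b_i+1$ for some $i$, then $G$ would contain an isolated vertex or split into the part indexed by the first rows and columns and the part indexed by the last ones, contradicting connectivity; so the staircase is gapless, and a step-by-step argument growing the order from an extremal row or column shows that any two gapless monotone staircases realising the same connected $G$ differ only by the global reversal and by reorderings inside twin-blocks. Carrying this out carefully — ruling out every spurious reordering using connectivity alone, and verifying that the reconstruction of the tilt patterns from the staircase is well defined — is where the real work lies; once it is done, the reduction above gives $s(\P_1)\in\{s(\P_2),s(\P_2^{\mathrm{V}}),s(\P_2^{\mathrm{H}}),s(\P_2^{\mathrm{R}})\}$, which is the assertion of the lemma.
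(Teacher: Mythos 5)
Your reduction is sensible: the four flips form a Klein four-group, $\mathrm{V}$ and $\mathrm{H}$ swap the tilt classes while $\mathrm{R}$ preserves them, and since a connected bipartite graph has a unique bipartition you may normalize so that $\P_1$ and $\P_2$ realise the same labelled pair $(X,Y)$, after which it suffices to show $s(\P_1)\in\{s(\P_2),s(\P_2^{\mathrm{R}})\}$. This matches the spirit of the source: the paper itself does not prove the lemma but cites it from~\cite{SOYU12}, where the underlying fact is exactly the rigidity you point to, namely that a connected bipartite permutation graph admits an ordering pair (a strong/staircase ordering) that is unique up to reversing both sides simultaneously and up to permuting twins.

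The problem is that your write-up stops precisely at that rigidity statement. You assert that ``a step-by-step argument growing the order from an extremal row or column shows that any two gapless monotone staircases realising the same connected $G$ differ only by the global reversal and by reorderings inside twin-blocks,'' and you yourself flag this, together with ``verifying that the reconstruction of the tilt patterns from the staircase is well defined,'' as ``where the real work lies.'' But these two deferred claims \emph{are} the lemma, just rephrased: once the bipartition is normalized, the content of the statement is exactly (a) uniqueness of the ordered staircase up to simultaneous reversal, and (b) that the interleaving of $X$- and $Y$-endpoints on each line --- hence $s(\P)$ --- is forced by the graph together with $(<_X,<_Y)$. Point (b) in particular is not a routine check: for an adjacent pair the relative order on $L_1$ and $L_2$ is forced by the tilt directions, but for a non-adjacent cross pair it is only forced via non-crossing constraints propagated through the connected structure (it genuinely fails for disconnected graphs such as $2K_2$), so connectivity must enter the argument explicitly. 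As it stands the proposal is a correct reduction plus an acknowledged placeholder for the core uniqueness proof, so it cannot be accepted as a proof of the lemma; you would need to carry out the extremal-vertex induction (or reproduce the argument of~\cite{SOYU12}) establishing (a) and (b).
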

A string representation $s(\P)$ is said to be \emph{canonical} if all the inequalities $s(\P)\geq s(\P^{\mathrm{V}})$, $s(\P)\geq s(\P^{\mathrm{H}})$, and $s(\P)\geq s(\P^{\mathrm{R}})$ hold. 

\smallskip \noindent \textbf{Algorithm for $n$ vertices. }
We construct the BDD representing the set of bipartite permutation graphs using the alternate strings of the canonical representation strings.
Each BDD node is identified with a state tuple $(i, h_{\mathrm{L}}, h_{\mathrm{R}}, c_{\mathrm{L}}, c_{\mathrm{R}},$\\ $F_{\mathrm{V}}, F_{\mathrm{H}}, F_{\mathrm{R}})$.
The integer $i$ is the level where the node is.
The heights $h_{\mathrm{L}}$ and $h_{\mathrm{R}}$ are
those of $x_1x_2 \dots x_n$ and $\overline{y_n}\,\overline{y_{n-1}} \dots \overline{y_1}$, respectively,
the purpose of which is the same as in Sec.~\ref{sec:pi}.

Let us describe $F_{\mathrm{V}}, F_{\mathrm{H}}$ and $F_{\mathrm{R}}$.
$F_{\mathrm{R}}$ is $\bot$ or $\top$,
which is used for deciding whether $s(\P)\geq s(\P^{\mathrm{R}})$ holds or not.
Recall that if $s(\P) = x_1y_1x_2y_2 \dots x_n y_n$,
$s(\P^{\mathrm{R}}) = \overline{y_n}\,\overline{x_n}\,\overline{y_{n-1}}\,\overline{x_{n-1}} \dots \overline{y_1}\,\overline{x_1}$.
According to the variable order $\alpha(s(\P))$,
we can decide whether $s(\P)> s(\P^{\mathrm{R}})$ holds or not using the heights $h_{\mathrm{L}}$ and $h_{\mathrm{R}}$
by the way described in Sec.~\ref{sec:pi}.
Then, $F_{\mathrm{R}}$ has the same role as $F$ in Sec.~\ref{sec:pi}.
Next, we consider $F_{\mathrm{V}}$, which is used for
deciding the canonicity of $s(\P)\geq s(\P^{\mathrm{V}})$.
Recall that if $s(\P) = x_1y_1x_2y_2 \dots x_n y_n$,
$s(\P^{\mathrm{V}}) = y_1x_1y_2x_2 \dots y_n x_n$.
We need to compare $x_1$ with $y_1$, $y_1$ with $x_1,\ldots,$ and $y_n$ with $x_n$
in order.
Recall that on the BDD, the value of $y_i$ is represented by arcs of each node in level $4 i - 1$.
The value of $x_i$ has already been determined by arcs of a node in level $4 i - 3$.
Therefore, to compare $x_i$ with $y_i$, we store the value of $x_i$ into nodes.
Strictly speaking,
if $i$ is odd, 
then, $c_{\mathrm{L}} = x_{\lceil i / 2 \rceil - 1}$ and $c_{\mathrm{R}} = y_{2n - \lceil i / 2 \rceil + 2}$.
If $i$ is even, 
then, $c_{\mathrm{L}} = x_{i/2}$ and $c_{\mathrm{R}} = y_{2n - i/2 + 2}$.
The stored values $c_{\mathrm{L}}$ and $c_{\mathrm{R}}$ are also used for deciding
whether $s(\P)\geq s(\P^{\mathrm{H}})$ holds or not in a similar way.

We estimate the number of BDD nodes by
counting the possible values of a state
$(i, h_{\mathrm{L}}, h_{\mathrm{R}}, c_{\mathrm{L}}, c_{\mathrm{R}},\allowbreak
F_{\mathrm{V}}, F_{\mathrm{H}}, F_{\mathrm{R}})$.
Since $1 \le i \le 2n$, $0 \le h_{\mathrm{L}} \le n$, $0 \le h_{\mathrm{R}} \le n$,
and the number of possible states of
$c_{\mathrm{L}}, c_{\mathrm{R}}, F_{\mathrm{V}}, F_{\mathrm{H}}, F_{\mathrm{R}}$
are two, the number of possible values of tuples is
$2n \times (n + 1)^2 \times 2^5
= \mathrm{O}(n^3)$.

\smallskip \noindent \textbf{Algorithm for $m$ edges. }
We present an algorithm that constructs the BDD representing
the set of (string representations of) bipartite permutation graphs
with $n$ vertices and $m$ edges when $n$ and $m$ are given.
The number of edges of a bipartite permutation graph $G$ is
that of intersections of the permutation diagram of $G$.
We use the following lemma.
\begin{lemma}
    The number of edges is $\sum_{i=1}^{n} h_{s(\P)}(2i)$.
\end{lemma}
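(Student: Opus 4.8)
Here is how I would approach the proof.

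The plan is to read $h_{s(\P)}(2i)$ off the intersection model of $\P=(\pi_1,\pi_2)$ and then sum over $i$. I would first fix any permutation diagram $\P$ of $G$ and recall that, by definition of a permutation graph, the edges of $G$ correspond exactly to the crossing pairs among the segments $l_w$, where $l_w$ joins x-coordinate $\pi_1(w)$ on $L_1$ to x-coordinate $\pi_2(w)$ on $L_2$. The length-$2i$ prefix of $s(\P)$ is $x_1y_1\cdots x_iy_i$, and by the defining properties of $x_j,y_j$ its symbols record, for the top positions $1,\dots,i$ and the bottom positions $1,\dots,i$, whether the incident segment lies in $X$ or in $Y$ (namely $x_j=\L$ iff the top-$j$ segment has $\pi_1<\pi_2$, and $y_j=\L$ iff the bottom-$j$ segment has $\pi_1>\pi_2$).

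Expanding $h_{s(\P)}(2i)=|x_1y_1\cdots x_iy_i|_{\L}-|x_1y_1\cdots x_iy_i|_{\R}$ and regrouping the four running counts segment by segment, the contribution of a fixed $w$ collapses to $[\pi_1(w)\le i]-[\pi_2(w)\le i]$ if $w\in X$ and to $[\pi_2(w)\le i]-[\pi_1(w)\le i]$ if $w\in Y$; in either case this is $1$ exactly when $\min(\pi_1(w),\pi_2(w))\le i<\max(\pi_1(w),\pi_2(w))$. Writing $\ell_w=\min(\pi_1(w),\pi_2(w))$ and $r_w=\max(\pi_1(w),\pi_2(w))$, this gives
\[
  h_{s(\P)}(2i)=\bigl|\{w\in V:\ell_w\le i<r_w\}\bigr|,
\]
i.e. $h_{s(\P)}(2i)$ is the number of segments straddling the vertical line between x-coordinates $i$ and $i+1$ — the exact analogue of the ``height $=$ number of currently open intervals'' fact used for proper interval graphs. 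Since $1\le\ell_w<r_w\le n$ (using $\pi_1(w)\ne\pi_2(w)$ for $n\ge2$), exchanging the order of summation yields
\[
  \sum_{i=1}^{n}h_{s(\P)}(2i)=\sum_{w\in V}(r_w-\ell_w)=\sum_{w\in V}\bigl|\pi_1(w)-\pi_2(w)\bigr|.
\]

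It then remains to identify this displacement sum with the number of crossings of $\P$, and here the bipartiteness of $\P$ is essential: no two segments of $X$ cross and no two of $Y$ cross (the observation recalled just before Lemma~\ref{lem:bpg_string}). Passing to the position permutation $\sigma=\pi_2\circ\pi_1^{-1}$, which joins top position $j$ to bottom position $\sigma(j)$ so that crossings are inversions of $\sigma$, that observation says $\sigma$ has no fixed point (for $n\ge2$) and is increasing both on its set of excedances $\{j:\sigma(j)>j\}$ and on its set of deficiencies $\{j:\sigma(j)<j\}$. For such a $\sigma$ I would show: (a) in any inversion $(j,k)$ the left index $j$ is forced to be an excedance and the right index $k$ a deficiency (the other three sign patterns contradict monotonicity on one of the two sets); and (b) for a fixed excedance $j$ every $k<j$ has $\sigma(k)<\sigma(j)$, so among the $\sigma(j)-1$ indices with value below $\sigma(j)$ precisely $j-1$ lie left of $j$, leaving $\sigma(j)-j$ to its right. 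Together (a) and (b) turn the inversion count into a sum over excedances of $\sigma(j)-j$, which combined with the previous display closes the argument.

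The first two steps — the prefix expansion and the summation swap — are routine bookkeeping. The step I expect to require the most care is this last identification: extracting from the $X/Y$ structure the precise relationship between the displacement sum $\sum_j|j-\sigma(j)|$ and the inversion count of $\sigma$, and in particular accounting for multiplicity correctly, since the naive execution of (a)–(b) charges each crossing to both of its endpoints. A cleaner route for this last step, avoiding the permutation‑statistics detour altogether, is to sweep the string once and charge each crossing to the unique event at which its two (oppositely sloped) segments first coexist, counting at that event the open segments on the opposite side — in the spirit of the proof of Lemma~\ref{lem:PIGE}; checking that this per‑event count is exactly the quantity appearing in the stated sum is the delicate point.
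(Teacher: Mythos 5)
Your first two steps are sound: with the paper's definitions, $h_{s(\P)}(2i)$ is exactly the number of segments whose two endpoints straddle the gap between positions $i$ and $i+1$, hence $\sum_{i=1}^{n}h_{s(\P)}(2i)=\sum_{w\in V}|\pi_1(w)-\pi_2(w)|$, and your facts (a) and (b) about $\sigma=\pi_2\circ\pi_1^{-1}$ are correct. The gap is in the final sentence, and it is fatal rather than ``delicate'': (a) and (b) give that the crossing number equals $\sum_{j:\sigma(j)>j}(\sigma(j)-j)=\sum_{w\in X}(\pi_2(w)-\pi_1(w))$, which is only the excedance half of the displacement sum; the symmetric computation over deficiencies yields the \emph{same} crossing number again, so your own intermediate results give $\sum_{i=1}^{n}h_{s(\P)}(2i)=2m$, not $m$. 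The factor of two cannot be argued away: at every even position exactly half of the straddling segments lie in $X$ and half in $Y$ (since the top and bottom prefixes of length $i$ each contain $i$ endpoints), so the height is always twice the one-class count that actually counts crossings. Concretely, for the single edge ($n=2$) the encoding gives $s(\P)=\L\L\R\R$, whence $h_{s(\P)}(2)+h_{s(\P)}(4)=2$ while $m=1$; the sweep-based alternative you sketch, done correctly, likewise charges each crossing once and produces the one-sided count, i.e.\ half the stated sum.

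So the proposal as written does not establish the statement; what it does establish, once (a)--(b) are carried out, is $m=\frac{1}{2}\sum_{i=1}^{n}h_{s(\P)}(2i)$, equivalently $m=\sum_{i=1}^{n}\bigl|\{w\in X:\pi_1(w)\le i<\pi_2(w)\}\bigr|$. For comparison purposes, note that the paper states this lemma without any proof, and the literal statement appears to carry the same normalization slip as the discussion around Lemma~\ref{lem:PIGE} (where the displayed sum evaluates to $m+n$ rather than $m$, while the algorithm actually adds the height \emph{before} the new $\L$). If you want a correct proof along your lines, keep your displacement computation but sum only over one class (your excedance identity), or run your sweep argument counting only oppositely sloped open segments at each event --- and state the result with the factor $\frac{1}{2}$ (or with one-sided heights), since the identity with the full heights fails for every graph with at least one edge.
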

We can easily obtain
\begin{equation}\label{eq:hvalue}
    \sum_{i=1}^{n} h_{s(\P)}(2i) = \sum_{i=1}^{\lceil n/2 \rceil} h_{s(\P)}(2i) + \sum_{i=1}^{\lfloor n/2 \rfloor} h_{s(\P^{\mathrm{R}})}(2i).
\end{equation}
To count the number of edges, we store this value into each BDD node.
Let us describe the detail. 
We make each BDD node maintain a tuple $(i, h_{\mathrm{L}}, h_{\mathrm{R}}, c_{\mathrm{L}}, c_{\mathrm{R}},$\\ $F_{\mathrm{V}}, F_{\mathrm{H}}, F_{\mathrm{R}}, m')$.
The first eight elements are the same as the ones described above.
The last element $m'$ is the current value of (\ref{eq:hvalue}).
Thus, the running time of the algorithm is $\mathrm{O}(n^3 m)$.
\begin{theorem}
 A BDD representing all connected bipartite permutation graphs with $n$ vertices, and $n$ vertices and $m$ edges can be constructed in $\mathrm{O}(n^3)$ and $\mathrm{O}(n^3m)$ time, respectively. 
\end{theorem}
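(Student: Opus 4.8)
The plan is to bound the number of nodes and arcs of the BDD produced by the construction described above, to check that each node is processed in constant time, and to observe that correctness reduces to the invariants maintained in the state tuples, exactly in the manner of Section~\ref{sec:pi}.

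First I would establish the $n$-vertex bound. Since the construction never creates two nodes carrying the same state tuple $(i, h_{\mathrm{L}}, h_{\mathrm{R}}, c_{\mathrm{L}}, c_{\mathrm{R}}, F_{\mathrm{V}}, F_{\mathrm{H}}, F_{\mathrm{R}})$, the number of nodes is at most the number of distinct such tuples; with $1 \le i \le 2n$, $0 \le h_{\mathrm{L}}, h_{\mathrm{R}} \le n$, and the remaining five components binary, this count is $2n \times (n+1)^2 \times 2^5 = \mathrm{O}(n^3)$, as already observed. Each non-terminal node has exactly two outgoing arcs, so the number of arcs is $\mathrm{O}(n^3)$ as well.

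Next I would verify the per-node cost. Creating the $\mtt{L}$- and $\mtt{R}$-arcs of a node requires only incrementing or decrementing one of $h_{\mathrm{L}}, h_{\mathrm{R}}$, copying or toggling a flag, and a constant number of comparisons of heights against $0$ (the pruning inherited from condition (iii) of Lemma~\ref{lem:bpg_string}) and against one another (the canonicity tests governing $F_{\mathrm{V}}, F_{\mathrm{H}}, F_{\mathrm{R}}$); all of this is $\mathrm{O}(1)$. To perform node sharing in constant time I would maintain a table indexed by the state that returns a pointer to the node with that state if it exists; since there are $\mathrm{O}(n^3)$ possible states, this table uses $\mathrm{O}(n^3)$ space and supports $\mathrm{O}(1)$ lookup and insertion. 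Hence the BDD is built in $\mathrm{O}(n^3)$ time and space.

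For the variant with $m$ edges, the state gains a coordinate $m' \in \{0, \dots, m\}$, multiplying the number of reachable states — and hence nodes, arcs, and the lookup table — by $m+1$. By the lemma expressing the number of edges as $\sum_{i=1}^{n} h_{s(\P)}(2i)$ together with identity~(\ref{eq:hvalue}), updating $m'$ along an arc is one addition, with pruning whenever $m'$ would exceed $m$; this is still $\mathrm{O}(1)$. Therefore the construction runs in $\mathrm{O}(n^3 m)$ time and space. The main obstacle, as in Section~\ref{sec:pi}, is not the complexity count but the correctness argument underlying it: one must check that the state is well-defined (every root-to-node path yields the same heights, stored characters, and flags) and that the three pruning regimes together are exactly equivalent to the three defining inequalities $s(\P) \ge s(\P^{\mathrm{V}})$, $s(\P) \ge s(\P^{\mathrm{H}})$, $s(\P) \ge s(\P^{\mathrm{R}})$ of canonicity, so that, by Lemma~\ref{lem:bpg_canonical}, the accepted strings biject with the connected bipartite permutation graphs. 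The new difficulty over the proper interval case is that three comparisons must be tracked simultaneously, but since $c_{\mathrm{L}}, c_{\mathrm{R}}$ recover the relevant character positions and each flag behaves like the single flag $F$ analysed before, this decomposes into three essentially independent copies of the earlier argument.
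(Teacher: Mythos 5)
Your proposal is correct and follows essentially the same route as the paper: bounding the BDD size by the number of distinct state tuples $(i, h_{\mathrm{L}}, h_{\mathrm{R}}, c_{\mathrm{L}}, c_{\mathrm{R}}, F_{\mathrm{V}}, F_{\mathrm{H}}, F_{\mathrm{R}})$, arguing constant-time state updates and node sharing via a state-indexed table, and extending the state by $m'$ (pruned when it exceeds $m$) for the edge-constrained variant. The extra details you supply (arc count, explicit lookup table) are consistent with how the paper handles the analogous proper interval case.
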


\smallskip \noindent \textbf{Chain graphs. }
A graph $G=(X\cup Y, E)$ is a \emph{chain} graph if $G$ is bipartite and each of $X$ and $Y$ has an inclusion ordering. Let $(x_1, x_2, \dots, x_{|X|})$ and $(y_1, y_2, \dots, y_{|Y|})$ be an inclusion ordering of $X$ and $Y$, respectively. 
Chain graphs are known to be a subclass of bipartite permutation graphs~\cite{Brandstadt:1999} and have the following permutation diagrams $\P=(\pi_1, \pi_2)$~\cite{OkamotoUU09}:
\begin{itemize}
 \item $\pi_1 = (x_1, x_2, \dots, x_{|X|}, y_{|Y|}, y_{|Y|-1}, \dots, y_1)$,  \item for $i, j\in\{1, \dots, |X|\}$ with $i<j$, $\pi_2(x_i) < \pi_2(x_j)$,
 \item for $i, j\in\{1, \dots, |Y|\}$ with $i<j$, $\pi_2(y_j) < \pi_2(y_i)$.
\end{itemize}
Chain graphs as bipartite permutation graphs have $2n$-bit string representations based on the permutation diagrams. From the diagram and Lemma~\ref{lem:bpg_canonical}, we observe that the string of $\pi_1$ is uniquely determined except for exchanging $X$ and $Y$. 
Since $\pi_1$ can be fixed as above, any chain graph can be represented using an $n$-bit string by sweeping $\pi_2$: The $i$th element of $\pi_2$ is encoded as $\L$ if $\pi_2^{-1}(i)\in X$ and is encoded as $\R$ if $\pi_2^{-1}(i)\in Y$. 
If a chain graph $G$ is disconnected, $G$ consists of two parts: a connected chain graph component and a set of isolated vertices~\cite{KOSU12}. 
We observe that the connected chain graphs have a one-to-one correspondence with the string representations up to reversal~\cite{PeledS95}. 
On the other hand, isolated vertices may arbitrarily belong to $X$ or $Y$.
To determine a unique string representation,
we assume that isolated vertices are all in $X$, where the representation strings must not end with $\mtt{R}$.
Thus, we obtain an algorithm to construct a BDD representing all canonical $n$-bit string representations of chain graphs and it runs in $\mathrm{O}(n)$. 
For the restriction problems, we adopt $2n$-bit strings defined as representations of bipartite permutation graphs instead of $n$-bit representations to compute the number of edges or the size of bicliques. 
In the algorithms, we need to check whether the constructed strings represent chain graphs satisfying the conditions described above. 
\begin{theorem}
 A BDD representing all chain graphs with $n$ vertices, $n$ vertices and maximum biclique size $k$, and $n$ vertices and $m$ edges can be constructed in $\mathrm{O}(n)$, $\mathrm{O}(k^2n)$, $\mathrm{O}(n^3m)$ time, respectively.  
  \end{theorem}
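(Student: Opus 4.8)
The plan is to handle the three bounds of the theorem one at a time, in each case reusing the breadth-first, state-labelled BDD construction of Section~\ref{sec:pi} and only pointing out where chain graphs need different bookkeeping. \emph{The $\mathrm{O}(n)$ bound.} Here I would build the BDD over the $n$-bit $\pi_2$-sweep encoding in which $\pi_1$ is fixed to the shape $(x_1,\dots,x_{|X|},y_{|Y|},\dots,y_1)$. First I would argue, from the structural facts cited in the text (\cite{PeledS95,KOSU12} together with Lemma~\ref{lem:bpg_canonical}), that every $n$-bit string without a trailing $\mtt{R}$ encodes a chain graph, that such a string and its reverse-complement $\overline{s}$ are the only two encodings of that graph, and that imposing the convention ``all isolated vertices lie in $X$'' (equivalently $c_n=\mtt{L}$) together with canonicity ``$s>\overline{s}$ or $s=\overline{s}$'' leaves exactly one representative per unlabeled chain graph on $n$ vertices. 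The BDD is then constructed exactly as in Section~\ref{sec:pi}: read the symbols in the alternate order $\alpha(s)$, and have each node store only its level $i$, the flag $F\in\{\bot,\top\}$ recording whether the prefix comparison $c_{\hat\imath}=\overline{c_{n+1-\hat\imath}}$ still holds or has already been resolved in favour of canonicity, and (at an even level) the single character read at the previous level that is still awaiting comparison, pruning to the $0$-terminal whenever a comparison forces $s<\overline{s}$; no height bookkeeping is needed because, unlike in Lemma~\ref{lem:pig_string}, no balancedness or positivity constraint applies to the $n$-bit encoding. The suffix condition $c_n=\mtt{L}$ is enforced by a single pruning of the $\mtt{R}$-arc at level $2$, since $c_n$ is the second symbol of $\alpha(s)$. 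A state thus ranges over a set of size $\mathrm{O}(n)$, so the BDD has $\mathrm{O}(n)$ nodes and is built in $\mathrm{O}(n)$ time.

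\emph{The $\mathrm{O}(k^2n)$ and $\mathrm{O}(n^3m)$ bounds.} The $n$-bit encoding no longer suffices, since neither the maximum biclique size nor the edge count can be read off from it, so I would switch to the $2n$-bit bipartite-permutation encoding $s(\P)$ introduced earlier (available because chain graphs are bipartite permutation graphs) and reuse the state tuple of the bipartite-permutation algorithm, namely $(i,h_{\mathrm{L}},h_{\mathrm{R}},c_{\mathrm{L}},c_{\mathrm{R}},F_{\mathrm{V}},F_{\mathrm{H}},F_{\mathrm{R}})$. For the edge version I would append the running value $m'$ of the edge sum and prune whenever $m'$ would exceed $m$, using the identity~(\ref{eq:hvalue}); for the biclique version I would observe that, just as the maximum clique size is governed by the height in Section~\ref{sec:pi}, the maximum biclique size of a chain graph is governed by the heights of $s(\P)$, so a single pruning rule ``no height exceeds $k$'' suffices and simultaneously caps $h_{\mathrm{L}}$ and $h_{\mathrm{R}}$ at $k$. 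The genuinely new ingredient is that we must additionally verify that $s(\P)$ is the encoding of a \emph{chain} graph, i.e.\ that the permutation $\pi_1$ it induces has the fixed shape above and that the isolated-vertex convention is respected; I would show this is a constant-memory property of the symbol sequence, checkable by a finite automaton running along $\alpha(s(\P))$, hence absorbable into the state without affecting the asymptotics. Counting states --- $\mathrm{O}(n)$ levels, $\mathrm{O}(k)$ (resp.\ $\mathrm{O}(n)$) values for each of the two heights, a constant number of flags and automaton states, and $\mathrm{O}(m)$ values for $m'$ in the edge case --- then yields $\mathrm{O}(k^2n)$ and $\mathrm{O}(n^3m)$ respectively.

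The step I expect to be the main obstacle is the correctness bookkeeping rather than the complexity analysis: pinning down which of the four diagram symmetries $\P,\P^{\mathrm{V}},\P^{\mathrm{H}},\P^{\mathrm{R}}$ remain nontrivial once $\pi_1$ is fixed to the canonical chain shape (I expect the symmetry group to collapse to the single reversal $s\mapsto\overline{s}$, but this must be verified so that the canonicity test is neither too weak nor too strong), and checking that the ``all isolated vertices in $X$'' convention is compatible with reversal-canonicity, so that exactly one string survives per unlabeled chain graph. The analogous points for the constrained versions --- that ``$s(\P)$ encodes a chain graph'' is truly a finite-state condition with respect to the alternate variable order, and that the maximum biclique size is read off from the heights in the claimed way --- are the other places where care is needed; once these are settled, the BDD construction and the state counts are routine adaptations of the arguments in Section~\ref{sec:pi} and the bipartite-permutation development.
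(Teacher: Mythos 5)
Your plan for the constrained bounds (switch to the $2n$-bit bipartite-permutation encoding, keep the state tuple with heights and $m'$, and add a finite check that the odd-position subsequence has the shape $\mtt{L}^{n_X}\mtt{R}^{n_Y}$) is essentially the paper's own route, although you leave open exactly the points the paper also only sketches (the surviving symmetries once $\pi_1$ is fixed, the biclique--height correspondence, and the treatment of isolated vertices in the constrained setting). The genuine gap is in your $\mathrm{O}(n)$ part: the canonical form you propose for the $n$-bit encoding is wrong as soon as the chain graph has isolated vertices. If $G$ consists of a connected component with string $t$ (which begins with $\mtt{R}$ and ends with $\mtt{L}$) plus $k\ge 1$ isolated vertices, then the strings not ending in $\mtt{R}$ that encode $G$ are $\mtt{L}^{k}t$ \emph{and} $\mtt{L}^{k}\overline{t}$ (swapping $X$ and $Y$ inside the connected part reverse-complements $t$ but leaves the $\mtt{L}^k$ prefix intact); your claim that $s$ and $\overline{s}$ are the only two encodings fails here, since $\overline{\mtt{L}^k t}=\overline{t}\,\mtt{R}^k$. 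Both $\mtt{L}^k t$ and $\mtt{L}^k\overline{t}$ begin with $\mtt{L}$, hence both strictly exceed their reverse complements, and both end with $\mtt{L}$, so your test ``no trailing $\mtt{R}$ and $s\ge\overline{s}$'' accepts both; whenever $t\ne\overline{t}$ your BDD therefore contains two strings for one unlabeled graph, which defeats the purpose of the construction. The paper's canonical form compares the \emph{stripped} string with its reverse complement: $s=\mtt{L}^k t$ with $t\ge\overline{t}$ (exactly as in the cochain lemma of the appendix, and the two canonical-string sets coincide).

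Fixing this is not just a matter of changing the acceptance condition: with the whole-string alternate order $\alpha(s)$ your state machine pairs position $\hat\imath$ with position $n+1-\hat\imath$, whereas the correct comparison pairs position $k+\hat\imath$ with position $n+1-\hat\imath$, and $k$ is not yet known while the back positions are already being consumed. This is why the paper encodes a canonical string $\mtt{L}^k t$ by the path label $\mtt{L}^k\alpha(t)$ and carries the parity bit $P$ of $k$ (plus the previous character and a three-valued flag) in the state, which is what yields the $\mathrm{O}(n)$-size BDD; you would need the same device. The same omission recurs in your constrained algorithms: a disconnected chain graph must be forced into the form $\mtt{L}^{2k}s'$ in the $2n$-bit encoding (the paper combines the $n$-vertex idea with the parameter $C$ for the $\mtt{L}^{n_X}\mtt{R}^{n_Y}$ check), and until that is added your canonicity test is again too weak for graphs with isolated vertices.
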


\subsection{Threshold graphs}
A graph $G$ is a \emph{threshold} graph if the vertex set of $G$ can be partitioned into $X$ and $Y$ such that $X$ is a clique and $Y$ is an independent set and each of $X$ and $Y$ has an inclusion ordering. 
Threshold graphs are a subclass of interval graphs, and any threshold graph can be constructed by the following process~\cite{Brandstadt:1999,Mahadev1995threshold}. 
First, if the size of the vertex set is one, the graph is threshold. 
Then, for a threshold graph $G$, (1) the graph by adding an isolated vertex to $G$ is also threshold, and (2) the graph adding a universal vertex to $G$ is also threshold. 
The sequence of the two operations (1) and (2) to construct a threshold graph is called a \emph{construction sequence}. 
It is easy to see that the two threshold graphs $G_1$ and $G_2$ are not isomorphic if the construction sequences of (1) and (2) of $G_1$ and $G_2$ are different.
From this characterization of threshold graphs, we obtain algorithms to construct a BDD representing all unlabeled threshold graphs by encoding the construction sequences of the operation (1) to $\L$ and (2) to $\R$. 
\begin{theorem}
 A BDD representing all threshold graphs with $n$ vertices, $n$ vertices and maximum clique size $k$, and $n$ vertices and $m$ edges can be constructed in $\mathrm{O}(n)$ time, $\mathrm{O}(kn)$ time, and $\mathrm{O}(nm)$ time, respectively. 
\end{theorem}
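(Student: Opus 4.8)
The plan is to use the one-to-one correspondence, recalled above, between unlabeled $n$-vertex threshold graphs and construction sequences over $\Sigma$, after fixing the one convention needed to make it a genuine bijection: the initial vertex may be regarded as added by either operation, so we stipulate that the first operation is $(1)$, i.e.\ the representation strings are exactly the $s = c_1\dots c_n \in \Sigma^{n}$ with $c_1 = \mtt{L}$. Because distinct construction sequences yield non-isomorphic graphs, this encoding is already canonical and, unlike the earlier classes, no canonicity pruning is required. For the first claim I would simply output the BDD accepting $\{\, s \in \Sigma^{n} : c_1 = \mtt{L}\,\}$: the root has its $\mtt{R}$-arc to the $0$-terminal and its $\mtt{L}$-arc to a chain of single nodes at levels $2,\dots,n$ (each arc going to the next level), and from level $n$ both arcs point at the $1$-terminal. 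This BDD has $n+2$ nodes and is built in $\mathrm{O}(n)$ time and space.

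For bounded maximum clique size, the combinatorial fact I would establish first is that, in the threshold graph encoded by $s$, vertices $v_i$ and $v_j$ with $i<j$ are adjacent if and only if the $j$-th operation is $(2)$ (since the edge $v_iv_j$ is created exactly when $v_j$ is added as a universal vertex and is never removed). Consequently every clique contains at most one vertex added by operation $(1)$, and taking the first vertex together with all vertices added by operation $(2)$ is optimal, so the maximum clique size equals $1 + |c_1 \dots c_n|_{\mtt{R}}$; thus "maximum clique size (at most) $k$" is precisely "(at most) $k-1$ occurrences of $\mtt{R}$". I would then augment each BDD node with a counter $r \in \{0,\dots,k-1\}$ equal to the number of $\mtt{R}$'s read so far, incrementing it on an $\mtt{R}$-arc and pruning to the $0$-terminal whenever $r$ would exceed $k-1$ (and, when "exactly $k$" is wanted, making an arc at level $n$ point at the $1$-terminal only when the resulting $r$ equals $k-1$). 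Each of the $n$ levels then holds $\mathrm{O}(k)$ nodes and every transition is computed in $\mathrm{O}(1)$, so the construction runs in $\mathrm{O}(kn)$ time and space.

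For bounded number of edges, the fact to prove is that operation $(2)$ applied as the $j$-th step adds exactly $j-1$ edges while operation $(1)$ adds none, so the graph encoded by $s$ has $\sum_{j \,:\, c_j = \mtt{R}} (j-1)$ edges. I would carry this partial sum $m'$ in the state of each node: reading $c_i$ leaves $m'$ unchanged on an $\mtt{L}$-arc and replaces it by $m' + (i-1)$ on an $\mtt{R}$-arc, pruning to the $0$-terminal whenever the new value exceeds $m$; an arc at level $n$ points at the $1$-terminal exactly when the resulting value equals $m$, and at the $0$-terminal otherwise. Since $0 \le m' \le m$, each of the $n$ levels holds $\mathrm{O}(m)$ nodes and transitions are $\mathrm{O}(1)$, which gives $\mathrm{O}(nm)$ time and space.

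The BDD bookkeeping is routine and parallels the earlier subsections; the only real content is the two identities — that adjacency of two vertices is governed solely by the later vertex's operation, and the per-step edge count — and the step I expect to need the most care is justifying the maximum-clique formula, in particular the argument that the first vertex can always be included in some maximum clique, so that the "$+1$" is correct no matter where the operations $(2)$ occur in the sequence.
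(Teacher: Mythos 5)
Your proposal is correct and takes essentially the same route as the paper's appendix: a trivial $\mathrm{O}(n)$ BDD over construction sequences, a counter of operation-(2) characters pruned at $k-1$ for the clique constraint, and the partial sum $\sum_{i:\,c_i=\mtt{R}}(i-1)$ pruned at $m$ for the edge constraint, yielding the same $\mathrm{O}(n)$, $\mathrm{O}(kn)$, $\mathrm{O}(nm)$ bounds. Your explicit fixing of the first character and the derivation that the maximum clique size is $1+|s|_{\mtt{R}}$ only make precise what the paper asserts (its phrase ``the number of operations (2) minus one'' is evidently a slip for ``plus one,'' as its own pruning threshold $k-1$ shows).
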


\subsubsection*{Acknowledgments.}
The authors are grateful for the helpful discussions of this work with Ryuhei Uehara. 
This work was supported in part by JSPS KAKENHI Grant Numbers JP18H04091, JP19K12098, JP20H05794, and JP21H05857.

 \bibliographystyle{plain}
 \bibliography{ref}

\begin{thebibliography}{10}

\bibitem{AvisF96}
David Avis and Komei Fukuda.
\newblock Reverse search for enumeration.
\newblock {\em Discret. Appl. Math.}, 65(1-3):21--46, 1996.

\bibitem{Brandstadt:1999}
Andreas Brandst\"{a}dt, Van~Bang Le, and Jeremy~P. Spinrad.
\newblock {\em Graph Classes: A Survey}.
\newblock Society for Industrial and Applied Mathematics, 1999.

\bibitem{CLRS09}
Thomas~H. Cormen, Charles~E. Leiserson, Ronald~L. Rivest, and Clifford Stein.
\newblock {\em Introduction to Algorithms, 3rd Edition}.
\newblock {MIT} Press, 2009.

\bibitem{DHH96}
Xiaotie Deng, Pavol Hell, and Jing Huang.
\newblock Linear-time representation algorithms for proper circular-arc graphs
  and proper interval graphs.
\newblock {\em {SIAM} J. Comput.}, 25(2):390--403, 1996.

\bibitem{Golumbic04}
Martin~Charles Golumbic.
\newblock {\em Algorithmic Graph Theory and Perfect Graphs (Annals of Discrete
  Mathematics, Vol 57)}.
\newblock Elsevier, 2004.

\bibitem{KaplanS96}
Haim Kaplan and Ron Shamir.
\newblock Pathwidth, bandwidth, and completion problems to proper interval
  graphs with small cliques.
\newblock {\em {SIAM} J. Comput.}, 25(3):540--561, 1996.

\bibitem{KawaharaIIM17}
Jun Kawahara, Takeru Inoue, Hiroaki Iwashita, and Shin{-}ichi Minato.
\newblock Frontier-based search for enumerating all constrained subgraphs with
  compressed representation.
\newblock {\em {IEICE} Trans. Fundam. Electron. Commun. Comput. Sci.},
  100-A(9):1773--1784, 2017.

\bibitem{KawaharaSSY19}
Jun Kawahara, Toshiki Saitoh, Hirofumi Suzuki, and Ryo Yoshinaka.
\newblock Colorful frontier-based search: Implicit enumeration of chordal and
  interval subgraphs.
\newblock In {\em Analysis of Experimental Algorithms (SEA{\({^2}\)} 2019)},
  volume 11544 of {\em LNCS}, pages 125--141. Springer, 2019.

\bibitem{KOSU12}
Shuji Kijima, Yota Otachi, Toshiki Saitoh, and Takeaki Uno.
\newblock Subgraph isomorphism in graph classes.
\newblock {\em Discret. Math.}, 312(21):3164--3173, 2012.

\bibitem{Knuth2014art}
D.E. Knuth.
\newblock {\em The Art of Computer Programming, Volume 4A: Combinatorial
  Algorithms}.
\newblock Number Part 1. Pearson Education, 2014.

\bibitem{Mahadev1995threshold}
N.V.R. Mahadev and U.N. Peled.
\newblock {\em Threshold Graphs and Related Topics}.
\newblock Elsevier Science, 1995.

\bibitem{OkamotoUU09}
Yoshio Okamoto, Ryuhei Uehara, and Takeaki Uno.
\newblock Counting the number of matchings in chordal and chordal bipartite
  graph classes.
\newblock In {\em Graph-Theoretic Concepts in Computer Science ({WG} 2009)},
  volume 5911 of {\em LNCS}, pages 296--307, 2009.

\bibitem{PeledS95}
Uri~N. Peled and Feng Sun.
\newblock Enumeration of difference graphs.
\newblock {\em Discret. Appl. Math.}, 60(1-3):311--318, 1995.

\bibitem{SOYU12}
Toshiki Saitoh, Yota Otachi, Katsuhisa Yamanaka, and Ryuhei Uehara.
\newblock Random generation and enumeration of bipartite permutation graphs.
\newblock {\em J. Discrete Algorithms}, 10:84--97, 2012.

\bibitem{SYKU10}
Toshiki Saitoh, Katsuhisa Yamanaka, Masashi Kiyomi, and Ryuhei Uehara.
\newblock Random generation and enumeration of proper interval graphs.
\newblock {\em {IEICE} Trans. Inf. Syst.}, 93-D(7):1816--1823, 2010.

\bibitem{SekineIT95}
Kyoko Sekine, Hiroshi Imai, and Seiichiro Tani.
\newblock Computing the tutte polynomial of a graph of moderate size.
\newblock In {\em International Symposium on Algorithms and Computation
  ({ISAAC} '95)}, volume 1004 of {\em LNCS}, pages 224--233. Springer, 1995.

\bibitem{Spinrad}
Jeremy~P Spinrad.
\newblock {\em {Efficient graph representations}}.
\newblock Fields Institute monographs. American Mathematical Society,
  Providence, RI, 2003.

\bibitem{YamazakiQU21}
Kazuaki Yamazaki, Mengze Qian, and Ryuhei Uehara.
\newblock Efficient enumeration of non-isomorphic distance-hereditary graphs
  and ptolemaic graphs.
\newblock In {\em International Conference and Workshops on Algorithms and
  Computation ({WALCOM} 2021)}, volume 12635 of {\em LNCS}, pages 284--295.
  Springer, 2021.

\bibitem{YamazakiSKU20}
Kazuaki Yamazaki, Toshiki Saitoh, Masashi Kiyomi, and Ryuhei Uehara.
\newblock Enumeration of nonisomorphic interval graphs and nonisomorphic
  permutation graphs.
\newblock {\em Theor. Comput. Sci.}, 806:310--322, 2020.

\end{thebibliography}

\end{document}